\newtheorem{thm}{Theorem}
\newtheorem{cor}{Corollary}
\newtheorem{obs}{Observation}
\newenvironment{keyword}{\par{\noindent\bf Keywords:}}
\begin{document}

\title{Combinatorial optimization problems with uncertain costs and the OWA criterion}

\author{Adam Kasperski\thanks{Corresponding author}\\
   {\small \textit{Institute of Industrial}}
  {\small \textit{Engineering and Management,}}
  {\small \textit{Wroc{\l}aw University of Technology,}}\\
  {\small \textit{Wybrze{\.z}e Wyspia{\'n}skiego 27,}}
  {\small \textit{50-370 Wroc{\l}aw, Poland,}}
  {\small \textit{adam.kasperski@pwr.edu.pl}}
  \and
  Pawe{\l} Zieli{\'n}ski\\
    {\small \textit{Institute of Mathematics}}
  {\small \textit{and Computer Science}}
  {\small \textit{Wroc{\l}aw University of Technology,}}\\
  {\small \textit{Wybrze{\.z}e Wyspia{\'n}skiego 27,}}
  {\small \textit{50-370 Wroc{\l}aw, Poland,}}
  {\small \textit{pawel.zielinski@pwr.edu.pl}}} 
  
  \date{}
    
\maketitle

\begin{abstract}
In this paper a class of combinatorial optimization problems with uncertain costs is discussed. The uncertainty is modeled by specifying a discrete scenario set containing $K$ distinct cost scenarios. The Ordered Weighted Averaging (OWA for short) aggregation operator is applied to choose a solution. Some well known criteria used in decision making under uncertainty such as the maximum, minimum, average, Hurwicz and median are special cases of OWA. Furthermore, by using OWA, the traditional robust (min-max) approach to combinatorial optimization problems with uncertain costs can be generalized.  The computational complexity and  approximability of the problem of minimizing OWA for the considered class of problems are investigated and some new positive and negative results in this area are provided. These results remain valid for many basic problems, such as network or resource allocation problems.\end{abstract}

\begin{keyword}
combinatorial optimization; OWA operator; robust optimization; computational complexity; approximation algorithms
\end{keyword}

\section{Introduction}

In many combinatorial optimization problems we seek an object composed of some elements of a finite set whose total cost is minimum. This is the case, for example, in an important class of network problems where the set of elements consists of all arcs of some network and we wish to find an object in this network such as a path, a spanning tree, or a matching whose total cost is minimum. In general, the combinatorial optimization problems can often be expressed as  0-1 programming problems with a linear objective function, where a binary variable is associated with each element and a set of constraints describes the set of feasible solutions. For a comprehensive description of this class of problems we refer the reader to~\cite{AH93, GN72, PS82}.

The usual assumption in combinatorial optimization is that all the element costs are precisely known. However, the assumption that all the costs are known in advance
 is often unrealistic. In practice,  before solving a problem, we only know a set of possible realizations of the element costs. 
 This set is called a \emph{scenario set} and each particular realization of  the element costs within this scenario set is called a \emph{scenario}.
  Several methods of defining scenario sets have been proposed in the existing literature. 
   The \emph{discrete} and \emph{interval} uncertainty representations are
   among the most popular 
   (see, e.g.,~\cite{KY97}). In the former, scenario set contains a finite number of explicitly given cost vectors. In the latter one, for each element an interval of its possible values is specified and scenario set is the Cartesian product of these intervals. In the discrete uncertainty representation, each scenario can model some event that has a global 
   influence on the element costs. On the other hand, the interval uncertainty representation is appropriate when each element cost may vary within some range independently on the values of the other costs. A modification of the interval uncertainty representation was proposed in~\cite{BS03}, where the authors assumed that only a fixed and a priori given number of costs may vary.
  More general scenario sets which can be used in mathematical programming problems were discussed, for example, in~\cite{BN09}.
In this paper we assume that no additional information, for example a probability distribution, for scenario set is provided.

If scenario set contains more than one scenario, then an additional criterion is required to choose a solution.
 In \emph{robust optimization} (see, e.g., \cite{BN09, KY97}) we typically seek a solution minimizing the worst case behavior over all scenarios. Hence the min-max and min-max regret criteria are widely applied. However, this approach to decision making is often regarded as too conservative or pessimistic (see, e.g., \cite{LR57}). In particular, the min-max criterion takes into account only one, the worst-case scenario, ignoring the information connected with the remaining scenarios. This criterion also assumes that decision makers are very risk averse, which is not always true.

In this paper 
we wish to investigate
a class of combinatorial optimization problems with the discrete uncertainty representation. Hence, a scenario set provided with the input data, contains a finite number of explicitly given cost scenarios. In order to choose a solution we propose to use the \emph{Ordered Weighted Averaging} aggregation operator (OWA for short) introduced by Yager in~\cite{YA88}. The OWA operator is widely applied to aggregate the criteria in multiobjective decision problems (see, e.g.,~\cite{GS12, OS03, YKB11}), but it can also be applied to choose a solution under the discrete uncertainty representation. It is enough to treat the cost of a given solution under $j$th scenario as a $j$th criterion. 
The key elements of the OWA operator are weights whose number equals the number of scenarios. The $j$th weight expresses an importance of the $j$th largest cost of a given solution. 
Hence, the weights allow a decision maker to take his attitude towards a risk into account and use  the information about all scenarios while computing a solution. The OWA operator generalizes the traditional criteria used in decision making under uncertainty such as the maximum, minimum, average, median, or Hurwicz criterion. So, by using  OWA  we can generalize the  min-max approach, typically used in robust optimization. Let us also point out that the OWA operator is a special case of \emph{Choquet integral},  a sophisticated tool for aggregating criteria in multiobjective decision problems  (see, e.g., \cite{GL08}). The Choquet integral has been recently applied to some multicriteria network problems in~\cite{GPS10}.

Unfortunately, the min-max combinatorial optimization problems are almost always harder to solve than their deterministic counterparts, even when the number of scenarios equals~2. In particular, 
 the min-max versions of the shortest path, minimum spanning tree, minimum assignment, minimum $s-t$ cut, and minimum selecting items problems are NP-hard even for~2 scenarios~\cite{ABV08, AV01,  KY97}. Furthermore, if the number of scenarios is a part of the input, then all these problems become strongly NP-hard and hard to approximate within any constant factor~\cite{KZ13, KZ09, KZ11}. Since the maximum criterion is a special case of OWA, the general problem of minimizing OWA is not easier. However, it is not difficult to show that some other particular cases of OWA, such as the minimum or average, lead to problems whose complexity is the same as the complexity of 
 their deterministic counterparts.
 It is therefore of interest to 
  provide a characterization of the problem complexity depending on various weight distributions.

In this paper we provide the following new results. In Section~\ref{sec2scen}, we study the case when the number of scenarios equals~2. We give a characterization of the problem complexity depending on the weight distribution. In Section~\ref{secfptas}, we show some sufficient conditions for the problem to admit a fully polynomial time approximation scheme (FPTAS), when the number of scenarios is constant. Finally, in Section~\ref{secunb}, we consider the case in which the number of scenarios is a part of the input. We discuss different types of weight distributions. We show that for nonincreasing weights (i.e. when larger weights are assigned to larger solution costs) and 
for the Hurwicz criterion, the problem admits an approximation algorithm whose worst case ratio depends on the problem parameters, in particular on the number of scenarios. On the other hand, we show that if the weights are nondecreasing, or the OWA criterion is median, then the problem is not at all approximable unless P=NP.

\section{Problem formulation}

Let $E=\{e_1,\dots,e_n\}$ be a finite set of \emph{elements} and $\Phi\subseteq 2^E$ be a set of \emph{feasible solutions}. In the deterministic case, each element $e_i\in E$ has a nonnegative \emph{cost} $c_i$ and we seek a solution whose total cost is minimum. Namely, we wish to solve the following optimization problem:
$$\mathcal{P}:\; \min_{X\in \Phi} F(X)=\min_{X\in \Phi} \sum_{e_i\in X} c_i$$ 
This formulation encompasses a large class of combinatorial optimization problems. In particular, for  the class of network problems  $E$ is the set of arcs of a given network $G=(V,E)$ and $\Phi$ contains the subsets of the arcs forming, for example,  $s-t$ paths, spanning trees, assignments, or $s-t$ cuts in $G$. In practice, problem~$\mathcal{P}$ is often expressed as a 0-1 programming one, where binary variable $x_i$ is associated with each element~$e_i$, $F(X)=\sum_{i=1}^n c_ix_i$,  and a system of constraints describes the set $\Phi$ in a compact form.

Before we discuss the uncertain version of problem $\mathcal{P}$, we recall the definition of the OWA operator, proposed by Yager in~\cite{YA88}. Let $(f_1,\dots,f_K)$ be a vector of reals. Let us introduce a vector $\pmb{w}=(w_1,\dots,w_K)$ such that $w_j\in [0,1]$, $j\in[K]$ (we use $[K]$ to denote the set $\{1,\dots,K\}$), and $w_1+\dots+w_K=1$. Let $\sigma$ be a permutation of $[K]$ such that $f_{\sigma(1)}\geq f_{\sigma(2)}\geq \dots\geq f_{\sigma(K)}$. Then
$${\rm owa}(f_1,\dots,f_K)=\sum_{i\in [K]} w_i f_{\sigma(i)}.$$
The OWA operator has several natural properties which easily follow from its definition  (see, e.g.~\cite{YKB11}). Since it is a convex combination of $f_1,\dots,f_K$ it holds $\min(f_1,\dots,f_K)\leq {\rm owa}(f_1,\dots,f_K)\leq \max(f_1,\dots,f_K)$.  It is also \emph{monotonic}, i.e. if $f_j \geq g_j$ for all $j\in [K]$, then 
$\mathrm{owa}(f_1,\dots,f_K)\geq \mathrm{owa}(g_1,\dots,g_K)$,
\emph{idempotent}, i.e. if $f_1=\dots=f_k=a$, then $\mathrm{owa}(f_1,\dots,f_K)=a$ and 
\emph{symmetric}, i.e. its value does not depend on the order of the values $f_1,\dots,f_K$. It generalizes several traditional criteria used in decision making under uncertainty and we will describe this fact later.

Assume that the costs in problem $\mathcal{P}$ are uncertain and they are specified in the form of scenario set $\Gamma=\{\pmb{c}_1,\dots,\pmb{c}_K\}$. Hence $\Gamma$ contains $K$ distinct \emph{cost scenarios}, where $\pmb{c}_j=(c_{1j},\dots,c_{nj})$ for $j\in [K]$. The cost of a given solution~$X$ depends on scenario~$\pmb{c}_j$ and 
will be denoted by
$F(X,\pmb{c}_j)=\sum_{e_i\in X} c_{ij}$. In this paper we will aggregate the costs by using the OWA operator. Namely, given a weight vector $\pmb{w}=(w_1,\dots,w_K)$, let us define
$$\mathrm{OWA}(X)={\rm owa}(F(X,\pmb{c}_1),\dots,F(X,\pmb{c}_K))=\sum_{j\in [K]} w_j F(X,\pmb{c}_{\sigma(j)}),$$
where $\sigma$ is a permutation of $[K]$ such that $F(X,\pmb{c}_{\sigma(1)})\geq F(X,\pmb{c}_{\sigma(2)} \geq \dots \geq F(X,\pmb{c}_{\sigma(K)})$. We will consider the following optimization problem:
$$\textsc{Min-Owa}~\mathcal{P}: \min_{X\in \Phi} \mathrm{OWA} (X).$$

We now discuss several special cases of the \textsc{Min-Owa}~$\mathcal{P}$ problem (see also Table~\ref{tabsc}). If $w_1=1$ and $w_j=0$ for $j=2,\dots,K$, then OWA becomes the maximum and the corresponding problem is denoted as  \textsc{Min-Max}~$\mathcal{P}$. This is a typical problem considered in the robust optimization framework. If $w_K=1$ and $w_j=0$ for $j=1,\dots,K-1$, then OWA becomes the minimum and the corresponding problem is denoted as \textsc{Min-Min}~$\mathcal{P}$. In general, if $w_k=1$ and $w_j=0$ for $j\in [K]\setminus\{k\}$, then OWA is the $k$-th largest cost and the problem is denoted as \textsc{Min-Quant}$(k)$~$\mathcal{P}$. In particular, when $k=\lfloor K/2 \rfloor +1$, then the $k$-th largest cost is median and the problem is denoted as \textsc{Min-Median}~$\mathcal{P}$.
If $w_j=1/K$ for all $j\in [K]$, i.e. when the  weights are \emph{uniform}, then OWA is the average (or the Laplace criterion) and the problem is denoted as \textsc{Min-Average}~$\mathcal{P}$. Finally, if $w_1=\alpha$ and $w_K=1-\alpha$, for some fixed $\alpha\in [0,1]$, and $w_j=0$ for the remaining weights, then we get the Hurwicz pessimism-optimism criterion and the problem is then denoted as \textsc{Min-Hurwicz}~$\mathcal{P}$.
\begin{table}[ht]
\caption{Special cases of \textsc{Min-Owa}~$\mathcal{P}$.} \label{tabsc}
\begin{tabular}{ll}
 \hline
	  Name of the problem & Weight distribution \\ \hline
		\textsc{Min-Max}~$\mathcal{P}$ & $w_1=1$ and $w_j=0$ for $j=2,\dots,K$ \\
		\textsc{Min-Min}~$\mathcal{P}$ & $w_K=1$ and $w_j=0$ for $j=1,\dots,K-1$ \\
		\textsc{Min-Average}~$\mathcal{P}$ & $w_j=1/K$ for $j\in [K]$ \\
		\textsc{Min-Quant}$(k)$~$\mathcal{P}$ & $w_k=1$ and $w_j=0$ for $j\in [K]\setminus \{k\}$ \\
		\textsc{Min-Median}~$\mathcal{P}$ & $w_{\lfloor K/2 \rfloor +1}=1$ and $w_j=0$ for $j\in [K] \setminus \{\lfloor K/2 \rfloor +1\}$\\
		\textsc{Min-Hurwicz}~$\mathcal{P}$ & $w_1=\alpha$, $w_K=1-\alpha$, $\alpha\in [0,1]$ and $w_j=0$ for $j\in [K]\setminus\{1,K\}$ \\ \hline
\end{tabular}
\end{table}

The aim of this paper is to
  explore the computational properties of $\textsc{Min-Owa}~\mathcal{P}$ depending on the number of scenarios and the weight distribution. In the next sections we will discuss the general problem as well as all its special cases listed in Table~\ref{tabsc}.

\section{Known complexity results}

Since \textsc{Min-Max}~$\mathcal{P}$ is a special case of \textsc{Min-Owa}~$\mathcal{P}$, all the known negative results for \textsc{Min-Max}~$\mathcal{P}$ remain true for \textsc{Min-Owa}~$\mathcal{P}$. We now briefly describe these results for various problems~$\mathcal{P}$. When $\mathcal{P}$ is \textsc{Shortest Path}, \textsc{Minimum Spanning Tree}, or \textsc{Minimum Assignment}, then \textsc{Min-Max}~$\mathcal{P}$ is  NP-hard for two scenarios~\cite{AV01, KY97}. Furthermore, when $\mathcal{P}$ is \textsc{Minimum s-t Cut}, then \textsc{Min-Max}~$\mathcal{P}$  is known to be strongly NP-hard for two scenarios~\cite{ABV08}. When the number of scenarios $K$ is \emph{unbounded}, i.e. 
$K$ is a part of the input, then the minmax versions of all these basic network problem become strongly NP-hard and not approximable within $O(\log^{1-\epsilon}K)$ for any $\epsilon>0$ unless NP $\subseteq$ 
DTIME$(n^{\mathrm{poly}( \log n)})$~\cite{KZ09, KZ11}. 
In the existing literature, the min-max version of the \textsc{Minimum Selecting Items} problem was also discussed. This problem has very simple combinatorial structure, and its set of  feasible solutions is defined as $\Phi=\{X\subseteq E: |X|=p\}$ for some fixed integer $p>0$. It turns out that \textsc{Min-Max Minimum Selecting Items} is  NP-hard for two scenarios~\cite{AV01} and becomes strongly NP-hard and hard to approximate within any constant factor if the number of scenarios is a part of the input~\cite{KZ13}.

The following positive and general result for \textsc{Min-Max}~$\mathcal{P}$ is well known (see, e.g.,~\cite{ABV09}):

\begin{thm}
\label{apprminmax}
	If $\mathcal{P}$ is polynomially solvable, then $\textsc{Min-Max}~\mathcal{P}$ is approximable within $K$.
\end{thm}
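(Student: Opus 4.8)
The key observation is that for any solution $X$, the OWA value with weight vector $w_1 = 1$ is exactly $\max_{j\in[K]} F(X,\pmb{c}_j)$, and this maximum is bounded above by the sum $\sum_{j\in[K]} F(X,\pmb{c}_j)$ (since all costs are nonnegative), which in turn equals $F(X,\pmb{c})$ where $\pmb{c} = \sum_{j\in[K]}\pmb{c}_j$ is the aggregated cost vector. Conversely, the maximum is bounded below by the average $\frac{1}{K}\sum_{j\in[K]} F(X,\pmb{c}_j) = \frac{1}{K}F(X,\pmb{c})$. So the plan is to solve the \emph{deterministic} problem $\mathcal{P}$ with the single cost vector $\pmb{c}$ and return its optimal solution as an approximate solution to \textsc{Min-Max}~$\mathcal{P}$.

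First I would define $\pmb{c} = (c_1,\dots,c_n)$ with $c_i = \sum_{j\in[K]} c_{ij}$, and let $X^* = \arg\min_{X\in\Phi} F(X,\pmb{c})$, which can be computed in polynomial time by the assumed polynomial solvability of $\mathcal{P}$ (the aggregated costs are still nonnegative). Let $X^{\mathrm{OPT}}$ be an optimal solution to \textsc{Min-Max}~$\mathcal{P}$, with optimal value $\mathrm{OPT} = \max_{j\in[K]} F(X^{\mathrm{OPT}},\pmb{c}_j)$. Then I would chain the inequalities:
\begin{equation*}
\max_{j\in[K]} F(X^*,\pmb{c}_j) \;\leq\; \sum_{j\in[K]} F(X^*,\pmb{c}_j) \;=\; F(X^*,\pmb{c}) \;\leq\; F(X^{\mathrm{OPT}},\pmb{c}) \;=\; \sum_{j\in[K]} F(X^{\mathrm{OPT}},\pmb{c}_j) \;\leq\; K\cdot \mathrm{OPT}.
\end{equation*}
The first inequality is nonnegativity of costs; the equality after it is just the definition of $\pmb{c}$ and linearity of $F$ in the cost vector; the middle inequality is optimality of $X^*$ for the aggregated problem; the next equality is again linearity; and the last inequality bounds each of the $K$ terms by the worst-case value $\mathrm{OPT}$ of $X^{\mathrm{OPT}}$. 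This shows $X^*$ is a $K$-approximate solution.

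There is essentially no hard part here — the argument is a short sandwich between the max, the sum, and $K$ times the max, combined with one use of optimality. The only thing to be careful about is that nonnegativity of the costs is genuinely used (it guarantees $\max \le \mathrm{sum}$), and that "polynomially solvable" is invoked with the aggregated cost vector, which is legitimate since summing nonnegative integer (or rational) costs preserves the structure of $\Phi$ and keeps the input size polynomial. I would also remark that the same proof shows the bound is tight in general and that it specializes correctly: when $K=1$ it gives an exact algorithm, as expected.
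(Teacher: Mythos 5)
Your argument is correct, but it is a genuinely different algorithm from the one the paper sketches. The paper's $K$-approximation solves the deterministic problem for the costs $\hat{c}_i=\max_{j\in [K]}c_{ij}$, i.e., it aggregates each element's costs with the \emph{maximum}; the (omitted) analysis sandwiches $\max_{j}F(X,\pmb{c}_j)\leq \sum_{e_i\in X}\hat{c}_i\leq \sum_{j}F(X,\pmb{c}_j)\leq K\max_{j}F(X,\pmb{c}_j)$. You aggregate with the \emph{sum} $c_i=\sum_{j\in[K]}c_{ij}$ instead, so your algorithm returns an optimal solution of \textsc{Min-Average}~$\mathcal{P}$; your chain of inequalities is complete and every step is valid (nonnegativity gives $\max\leq\mathrm{sum}$, optimality of $X^*$ for the aggregated instance gives the middle step, and $\mathrm{sum}\leq K\cdot\max$ closes it). The practical difference shows up in how the argument generalizes: the paper's max-aggregation is exactly the $w_1=1$ case of the OWA-aggregated costs used in Theorem~\ref{thm1}, and that scheme extends to arbitrary nonincreasing weights with the sharper ratio $w_1K$; your sum-aggregation also extends to nonincreasing weights (via $\mathrm{OWA}(X)\geq \frac{1}{K}\sum_j F(X,\pmb{c}_j)$) but only yields the weaker ratio $K$ there, since the returned solution ignores the weights entirely. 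Your closing remark about tightness is also right: on the instance of Table~\ref{tab1} all elements have identical aggregated costs under either aggregation, so both algorithms can return a solution that is a factor $K$ off.
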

The idea of the $K$-approximation algorithm consists in solving the deterministic problem~$\mathcal{P}$ for the costs $\hat{c}_i=\max_{j\in [K]} c_{ij}$, $e_i\in E$. We thus first aggregate the costs using the maximum criterion and then compute an optimal solution for the aggregated costs. In this paper we will extend this idea to the general \textsc{Min-Owa}~$\mathcal{P}$ problem.  For particular problems~$\mathcal{P}$, better approximation algorithm exist. Namely, \textsc{Min-Max Minimum Spanning Tree} is approximable within $O(\log^2 K)$~\cite{KZ11} and \textsc{Min-Max Minimum Selecting Items} is approximable within $O(\log K/ \log \log K)$~\cite{DOE13}.

It is not difficult to identify  some special cases of \textsc{Min-Owa}~$\mathcal{P}$ which are polynomially solvable. 
\begin{obs}
If $\mathcal{P}$ is polynomially solvable, then $\textsc{Min-Min}~\mathcal{P}$ and $\textsc{Min-Average}~\mathcal{P}$ are polynomially solvable.
\end{obs}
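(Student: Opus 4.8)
The plan is to observe that in both special cases the OWA objective collapses, essentially for free, to one or several instances of the deterministic problem $\mathcal{P}$, which is polynomially solvable by assumption; no clever combinatorics is needed, only the definition of OWA and the linearity of $F(\cdot,\pmb{c}_j)$ in the element costs.

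First I would handle $\textsc{Min-Average}~\mathcal{P}$. With uniform weights $w_j=1/K$ the operator $\mathrm{owa}$ is symmetric and equals the arithmetic mean, so for every $X\in\Phi$
$$\mathrm{OWA}(X)=\frac1K\sum_{j\in[K]}F(X,\pmb{c}_j)=\frac1K\sum_{j\in[K]}\sum_{e_i\in X}c_{ij}=\sum_{e_i\in X}\Bigl(\tfrac1K\sum_{j\in[K]}c_{ij}\Bigr).$$
Hence minimizing $\mathrm{OWA}$ over $\Phi$ is exactly the deterministic problem $\mathcal{P}$ with the aggregated nonnegative costs $\bar c_i=\frac1K\sum_{j\in[K]}c_{ij}$; since the set of optimal solutions is unchanged by scaling all costs by the constant $K$, one may equivalently use the costs $\sum_{j\in[K]}c_{ij}$. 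Either way a single call to the polynomial algorithm for $\mathcal{P}$ on these aggregated costs (computable in $O(nK)$ time) returns an optimal solution.

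Next I would treat $\textsc{Min-Min}~\mathcal{P}$. Here $w_K=1$, so $\mathrm{OWA}(X)=\min_{j\in[K]}F(X,\pmb{c}_j)$, and the key step is that the two minima commute:
$$\min_{X\in\Phi}\mathrm{OWA}(X)=\min_{X\in\Phi}\min_{j\in[K]}F(X,\pmb{c}_j)=\min_{j\in[K]}\;\min_{X\in\Phi}F(X,\pmb{c}_j).$$
For each fixed scenario $j$ the inner problem $\min_{X\in\Phi}F(X,\pmb{c}_j)$ is precisely the deterministic problem $\mathcal{P}$ with cost vector $\pmb{c}_j$, solvable in polynomial time; solving all $K$ of these and returning a solution attaining the smallest objective value yields an optimal solution to $\textsc{Min-Min}~\mathcal{P}$ in $K$ times the running time of the algorithm for $\mathcal{P}$, hence in polynomial time.

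There is essentially no obstacle here: both reductions are immediate. The only point worth making explicit is that $\Phi$ is never enumerated — it is accessed solely through the assumed polynomial-time algorithm for $\mathcal{P}$ — so the stated polynomial bounds depend only on $n$, $K$, and that algorithm's complexity, and the argument applies uniformly to all of the network and selecting-items instances of $\mathcal{P}$ mentioned above.
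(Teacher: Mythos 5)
Your proposal is correct and follows exactly the paper's own argument: for \textsc{Min-Average}~$\mathcal{P}$ solve $\mathcal{P}$ once on the averaged costs $\hat{c}_i=\frac{1}{K}\sum_{j\in[K]}c_{ij}$, and for \textsc{Min-Min}~$\mathcal{P}$ solve $\mathcal{P}$ separately under each scenario and keep the best of the $K$ solutions. The only difference is that you spell out the exchange of the two minima and the linearity step explicitly, which the paper leaves implicit.
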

Indeed,
in order to find an optimal solution to  $\textsc{Min-Average}~\mathcal{P}$ it is sufficient  
to solve~$\mathcal{P}$ for the average costs $\hat{c}_i=\frac{1}{K}\sum_{j\in [K]} c_{ij}$.
 In order to solve $\textsc{Min-Min}~\mathcal{P}$ it is enough to compute a sequence of solutions $X_1,\dots X_K$ such that $X_j$ minimizes $F(X,\pmb{c}_j)$ and choose $X_i\in\{X_1,\dots,X_K\}$ with the
minimum value of $F(X_i, \pmb{c}_i)$.

\section{The problem with two scenarios}
\label{sec2scen}

In this section we provide a characterization of the complexity of  \textsc{Min-Owa}~$\mathcal{P}$
 when the number of scenarios equals~2. This case can be described by a single weight $w_1\in [0,1]$, because $w_2=1-w_1$. Observe that OWA  is then equivalent to the Hurwicz criterion with $\alpha=w_1$. In this section, 
 for simplicity of notations,
 we will write~$\alpha$ instead of $w_1$.
 The case of polynomial solvability of $\textsc{Min-Owa}~\mathcal{P}$  is established by 
 the following theorem.
\begin{thm}
\label{thmK1}
		Let $K=2$. Then $\textsc{Min-Owa}~\mathcal{P}$ is polynomially solvable when $\mathcal{P}$ is polynomially solvable and $\alpha\in [0,1/2]$.
\end{thm}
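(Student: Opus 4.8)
The plan is to reduce $\textsc{Min-Owa}~\mathcal{P}$ with $K=2$ and $\alpha\in[0,1/2]$ to solving two instances of the deterministic problem $\mathcal{P}$. Fix $X\in\Phi$ and write $a=F(X,\pmb{c}_1)$, $b=F(X,\pmb{c}_2)$, so that $\mathrm{OWA}(X)=\alpha\max(a,b)+(1-\alpha)\min(a,b)$. Introduce the two linear combinations $L_1(X)=\alpha a+(1-\alpha)b$ and $L_2(X)=(1-\alpha)a+\alpha b$. The first (and only essential) step is the identity $\mathrm{OWA}(X)=\min\{L_1(X),L_2(X)\}$, valid precisely because $\alpha\le 1/2$. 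Indeed $L_1(X)-L_2(X)=(2\alpha-1)(a-b)$, so if $a\ge b$ then $L_1(X)\le L_2(X)$ and $\mathrm{OWA}(X)=\alpha a+(1-\alpha)b=L_1(X)$, while if $a<b$ then $L_2(X)\le L_1(X)$ and $\mathrm{OWA}(X)=\alpha b+(1-\alpha)a=L_2(X)$; in both cases $\mathrm{OWA}(X)$ equals the smaller of $L_1(X)$ and $L_2(X)$.

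Given this identity, the second step is a swap of minima:
$$\min_{X\in\Phi}\mathrm{OWA}(X)=\min_{X\in\Phi}\min\{L_1(X),L_2(X)\}=\min\Big\{\min_{X\in\Phi}L_1(X),\ \min_{X\in\Phi}L_2(X)\Big\}.$$
Since $L_1(X)=\sum_{e_i\in X}(\alpha c_{i1}+(1-\alpha)c_{i2})$ and $L_2(X)=\sum_{e_i\in X}((1-\alpha)c_{i1}+\alpha c_{i2})$ are ordinary linear objectives over $\Phi$ with nonnegative element costs (convex combinations of the nonnegative scenario costs), each inner minimization is an instance of the deterministic problem $\mathcal{P}$, hence solvable in polynomial time. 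The algorithm is therefore: compute an optimal solution $X^{(1)}$ of $\mathcal{P}$ for the costs $\hat c_i^{(1)}=\alpha c_{i1}+(1-\alpha)c_{i2}$, an optimal solution $X^{(2)}$ of $\mathcal{P}$ for the costs $\hat c_i^{(2)}=(1-\alpha)c_{i1}+\alpha c_{i2}$, and output the one among $X^{(1)},X^{(2)}$ with the smaller value of $\mathrm{OWA}$.

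The last step is to verify optimality of the returned solution. By the identity, $\mathrm{OWA}(X^{(1)})\le L_1(X^{(1)})=\min_{X\in\Phi}L_1(X)$ and $\mathrm{OWA}(X^{(2)})\le L_2(X^{(2)})=\min_{X\in\Phi}L_2(X)$, so the output has $\mathrm{OWA}$ value at most $\min\{\min_{X\in\Phi}L_1(X),\min_{X\in\Phi}L_2(X)\}$, which by the displayed equality equals $\min_{X\in\Phi}\mathrm{OWA}(X)$; being the value of a feasible solution, it therefore equals the optimum. There is no genuine obstacle here beyond recognizing the first-step identity; the rest is bookkeeping. It is worth noting where $\alpha\le 1/2$ is used: for $\alpha>1/2$ the sign of $2\alpha-1$ flips and $\mathrm{OWA}(X)$ becomes the \emph{maximum} of $L_1(X)$ and $L_2(X)$, a convex rather than concave aggregation that no longer decomposes into independent deterministic problems — in line with the hardness of $\textsc{Min-Max}~\mathcal{P}$, which is the extreme case $\alpha=1$.
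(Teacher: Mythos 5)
Your proof is correct, and while the algorithm you arrive at is exactly the one in the paper (solve $\mathcal{P}$ twice, once for the costs $\alpha c_{i1}+(1-\alpha)c_{i2}$ and once for $(1-\alpha)c_{i1}+\alpha c_{i2}$, and keep the better of the two solutions), your correctness argument takes a genuinely cleaner route. The paper's key identity is $\mathrm{OWA}(X)=\min\{H_1(X),H_2(X)\}$ with $H_1,H_2$ each a \emph{maximum} of two linear functions; that identity holds for every $\alpha\in[0,1]$, so the paper must then invoke $\alpha\le 1/2$ inside a case analysis (inequalities (1)--(3) of its proof) to argue that the minimizer of one of the linear pieces already beats the optimum. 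Your identity $\mathrm{OWA}(X)=\min\{L_1(X),L_2(X)\}$, with $L_1,L_2$ \emph{linear}, is where the hypothesis $\alpha\le 1/2$ is consumed, via the sign of $(2\alpha-1)(a-b)$; once OWA is exhibited as a pointwise minimum of two linear objectives, optimality of the returned solution is an immediate exchange of minima with no further case analysis. This concave-representation viewpoint also makes the boundary of tractability transparent: for $\alpha>1/2$ the same two linear functions aggregate by $\max$ instead of $\min$, which is precisely the min--max structure that Corollary~2 of the paper shows to be NP-hard. Your argument also absorbs the case $\alpha=0$ uniformly, which the paper treats separately as \textsc{Min-Min}~$\mathcal{P}$. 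One minor point worth stating explicitly (you do gesture at it): the aggregated costs $\alpha c_{i1}+(1-\alpha)c_{i2}$ and $(1-\alpha)c_{i1}+\alpha c_{i2}$ are nonnegative, so the two subproblems really are instances of $\mathcal{P}$ as formulated.
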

\begin{proof}
		If $\alpha=0$, then we get the \textsc{Min-Min}~$\mathcal{P}$ problem which is polynomially solvable. So, assume that $\alpha>0$.
		Let us define
		$$H_1(X)=\max\{F(X,\pmb{c}_1), \alpha F(X,\pmb{c}_2)+(1-\alpha) F(X,\pmb{c}_1)\},$$
		$$H_2(X)=\max\{F(X,\pmb{c}_2), \alpha F(X,\pmb{c}_1)+(1-\alpha) F(X,\pmb{c}_2)\}.$$
	 An easy verification shows that $\mathrm{OWA}(X)=\min\{H_1(X),H_2(X)\}$. Let $X_1$ be a solution minimizing $\alpha F(X,\pmb{c}_2)+(1-\alpha) F(X,\pmb{c}_1)$ and let $X_2$ be a solution minimizing $\alpha F(X,\pmb{c}_1)+(1-\alpha) F(X,\pmb{c}_2)$. We will show that either $X_1$ or $X_2$ minimizes OWA. This will complete the proof, since both $X_1$ and $X_2$ can be computed in polynomial time provided that $\mathcal{P}$ is polynomially solvable.
	Let $X^*$ be an optimal solution to $\textsc{Min-Owa}~\mathcal{P}$ and suppose that 
	$\mathrm{OWA}(X^*)=H_1(X^*)\leq H_2(X^*)$. Then, by the definition of $X_1$, we get
    \begin{equation}
    \label{e00}
    \alpha F(X_1,\pmb{c}_2)+(1-\alpha)F(X_1,\pmb{c}_1)
    \leq \alpha F(X^*,\pmb{c}_2)+(1-\alpha)F(X^*,\pmb{c}_1)\leq H_1(X^*).
    \end{equation}
If  $F(X_1,\pmb{c}_1)\leq    \alpha F(X_1,\pmb{c}_2)+(1-\alpha)F(X_1,\pmb{c}_1)$,
then $H_1(X_1)\leq H_1(X^*)$ and $\mathrm{OWA}(X_1)\leq\mathrm{OWA}(X^*)$, which completes the proof.
Assume that $F(X_1,\pmb{c}_1)  >\alpha F(X_1,\pmb{c}_2)+(1-\alpha)F(X_1,\pmb{c}_1)$, which implies 
$F(X_1,\pmb{c}_1)  > F(X_1,\pmb{c}_2)$.
Since $\alpha \in (0,1/2]$, we get
\begin{equation}
\label{h00}
\alpha F(X_1,\pmb{c}_1)+(1-\alpha)F(X_1,\pmb{c}_2)\leq
\alpha F(X_1,\pmb{c}_2)+(1-\alpha)F(X_1,\pmb{c}_1).
\end{equation}
Furthermore
\begin{equation}
\label{h01}
F(X_1,\pmb{c}_2)\leq\alpha F(X_1,\pmb{c}_1)+(1-\alpha)F(X_1,\pmb{c}_2)=H_2(X_1).
\end{equation}
Inequalities (\ref{e00}), (\ref{h00}) and (\ref{h01}) imply
 $\mathrm{OWA}(X_1)\leq H_2(X_1)\leq H_1(X^*)=\mathrm{OWA}(X^*)$.
The second case, when $\mathrm{OWA}(X^*)=H_2(X^*)$ is just symmetric and involves $X_2$ instead of $X_1$.
\end{proof}
We now consider the case with $\alpha\in (1/2,1]$. We will show that it  is harder than the case with $\alpha\in [0,1/2]$, by using a slight modification of the proof of NP-hardness of the \textsc{Min-Max Shortest Path} problem for two scenarios shown in~\cite{JJ98, KY97}.
\begin{cor}
		\label{thm1a}
		Let $K=2$. Then for any $\alpha\in (1/2,1]$ the $\textsc{Min-Owa Shortest Path}$ problem is NP-hard.
\end{cor}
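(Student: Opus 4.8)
The plan is to adapt the classical NP-hardness reduction for \textsc{Min-Max Shortest Path} with two scenarios, which is itself a reduction from \textsc{Partition}. Recall that in that reduction one builds a layered network in which each layer $i$ corresponds to an item $a_i$ of the \textsc{Partition} instance and offers two parallel arcs: one "charging" $a_i$ on scenario $\pmb{c}_1$ (and $0$ on $\pmb{c}_2$), the other charging $a_i$ on $\pmb{c}_2$ (and $0$ on $\pmb{c}_1$). An $s$--$t$ path then corresponds to a partition of the items into two sets $S_1,S_2$, with $F(X,\pmb{c}_1)=\sum_{i\in S_1}a_i$ and $F(X,\pmb{c}_2)=\sum_{i\in S_2}a_i$, so that $F(X,\pmb{c}_1)+F(X,\pmb{c}_2)=A$ is constant, where $A=\sum_i a_i$. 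First I would set up exactly this network, so that every feasible path $X$ satisfies $F(X,\pmb{c}_1)+F(X,\pmb{c}_2)=A$.

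Next I would analyze $\mathrm{OWA}(X)$ on this instance for a fixed $\alpha\in(1/2,1]$. Writing $p=F(X,\pmb{c}_1)$ and $q=F(X,\pmb{c}_2)=A-p$, we have $\mathrm{OWA}(X)=\alpha\max\{p,q\}+(1-\alpha)\min\{p,q\}$. Using $\max+\min=A$, this equals $(1-\alpha)A+(2\alpha-1)\max\{p,q\}$. Since $2\alpha-1>0$, minimizing $\mathrm{OWA}(X)$ is equivalent to minimizing $\max\{p,A-p\}$, which is minimized (at value $A/2$) precisely when $p=A/2$, i.e. when the \textsc{Partition} instance has a solution. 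Hence the \textsc{Partition} instance is a ``yes'' instance if and only if the optimal value of \textsc{Min-Owa Shortest Path} on the constructed instance is exactly $(1-\alpha)A+(2\alpha-1)A/2=A/2$; otherwise the optimal value is strictly larger. This gives a polynomial-time many-one reduction from \textsc{Partition}, establishing NP-hardness.

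The step requiring the most care is the same one that makes the original \textsc{Min-Max} reduction slightly non-trivial: the bare layered network as described allows a path to pick the same ``side'' for every item, and one must check the gap argument is robust, i.e. that when no perfect partition exists the second-best value of $\max\{p,A-p\}$ is bounded away from $A/2$ by at least $1$ (which holds because the $a_i$ are integers, so $p$ is an integer and $\max\{p,A-p\}\ge \lceil A/2\rceil \ge A/2 + 1/2$ when $A$ is odd or when $p\ne A/2$). One must also confirm that the transformation from the \textsc{Min-Max} objective to the \textsc{Min-Owa} objective genuinely uses $\alpha>1/2$: when $\alpha\le 1/2$ the coefficient $2\alpha-1$ is nonpositive and the argument collapses, consistent with Theorem~\ref{thmK1}. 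Since $\alpha$ is a fixed rational (or we may treat it symbolically, as the only property used is $2\alpha-1>0$), all costs and the threshold $A/2$ are computed in polynomial time, so the reduction is valid and the corollary follows.
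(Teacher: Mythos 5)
Your proposal is correct and follows essentially the same route as the paper: the identical \textsc{Partition} reduction with the two-parallel-arcs-per-item network, and the observation that with $F(X,\pmb{c}_1)+F(X,\pmb{c}_2)=A$ fixed and $2\alpha-1>0$, minimizing $\mathrm{OWA}(X)=(1-\alpha)A+(2\alpha-1)\max\{p,A-p\}$ is equivalent to balancing the partition. The paper phrases the same computation as $\mathrm{OWA}(X)=S+2\epsilon(S_1-S)$ with $\alpha=1/2+\epsilon$, but the argument is the same.
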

\begin{proof}
The reduction constructed in~\cite{JJ98, KY97} is as follows. Consider the following NP-complete \textsc{Partition} problem.  We are given a collection of positive integers $A=(a_1,\dots,a_n)$ such that $\sum_{i=1}^n a_i=2S$. We ask if  there is a subset $I\subseteq \{1,\dots,n\}$ such that $\sum_{i\in I} a_i=S$. Given an instance of \textsc{Partition}, we construct
 a graph shown in Figure~\ref{fig2}. 	
\begin{figure}[ht]
			\centering
      \includegraphics*{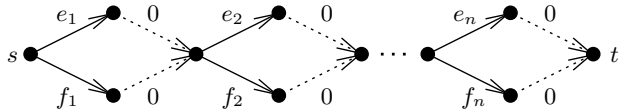}
      \caption{The graph in the reduction. The dummy (dashed) arcs have zero costs under 
       $\pmb{c}_1$ and $\pmb{c}_2$.} \label{fig2}
\end{figure}
We also form two scenarios. Under the first scenario $\pmb{c}_1$, the costs of the arcs $e_1,\dots,e_n$ are $a_1,\dots,a_n$ and the cost of all the remaining arcs are~0. Under the second scenario $\pmb{c}_2$, the costs of the arcs $f_1,\dots,f_n$ are $a_1,\dots,a_n$ and the costs of all the remaining arcs are~0.
Let $\alpha=1/2+\epsilon$, where $\epsilon\in (0,1/2]$.
 We claim that the answer to \textsc{Partition} is yes if and only if there is a path $X$ from $s$ to $t$ such that $\mathrm{OWA}(X)\leq S$. Indeed, if the answer is yes, then we form the path $X$ by choosing arcs $e_i$ for $i\in I$ and $f_i$ for $i\notin I$ and complete it by dummy arcs. Then $F(X,\pmb{c}_1)=F(X,\pmb{c}_2)=S$ and $\mathrm{OWA}(X)=S$. On the other hand, suppose that the answer to \textsc{Partition} is no. Then for each path $X$ either $F(X,\pmb{c}_1)=S_1>S$ or $F(X,\pmb{c}_2)=S_2>S$. Assume that the first case holds (the second one is symmetric). Then $F(X,\pmb{c}_2)=2S-S_1$ and $\mathrm{OWA}(X)=(\frac{1}{2}+\epsilon)S_1+(\frac{1}{2}-\epsilon)(2S-S_1)=S+2\epsilon(S_1-S)$ and so $\mathrm{OWA}(X)>S$ since $S_1>S$ and $\epsilon>0$.
\end{proof}

Theorem~\ref{thm1a} remains  true when $\mathcal{P}$ is  \textsc{Minimum Spanning Tree}, \textsc{Minimum s-t Cut} or \textsc{Minimum Assignment}. To see this, observe that each path in the graph shown in Figure~\ref{fig2} can be transformed into a spanning tree of the same cost under both scenarios by adding a number of dummy arcs and vice versa, each spanning tree in this graph can be transformed into a path of the same cost under both scenarios by removing a number of dummy arcs. In order to prove the result for \textsc{Minimum s-t cut} and \textsc{Minimum Assignment}, we only need to replace the graph from Figure~\ref{fig2} with the graphs depicted
 in Figure~\ref{fig3}a and~\ref{fig3}b, respectively. The proof is then the same as for the \textsc{Shortest Path} problem.
 Therefore, from now on each negative result proven for the \textsc{Shortest Path} problem, can be transformed into \textsc{Minimum Spanning Tree}, \textsc{Minimum s-t Cut} or \textsc{Minimum Assignment} by using the transformation just described.

\begin{figure}[ht]
\centering
      \includegraphics*{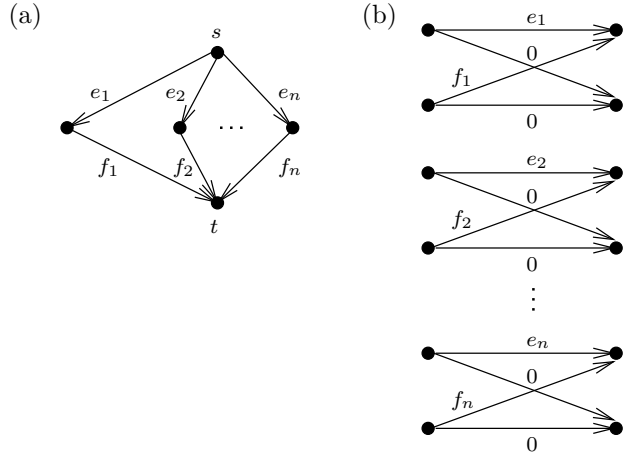}
      \caption{The graphs: (a) for the minimum $s-t$ cut problem,
       (b) for the minimum assignment problem.} \label{fig3}
\end{figure}

In Section~\ref{secunb} we will show that the problem with $K=2$ and $\alpha\in (1/2,1]$ admits a simple $2\alpha$-approximation algorithm, provided that $\mathcal{P}$ is polynomially solvable.
Moreover, we will prove  that when $K=3$ minimizing the Hurwicz criterion for \textsc{Shortest path} is NP-hard for any $\alpha\in (0,1]$.

\section{The problem with constant number of scenarios}
\label{secfptas}

In this section we discuss the case when $K$ is constant. We will show that under some additional  assumptions  $\textsc{Min-Owa}~\mathcal{P}$ admits then a fully polynomial time approximation scheme (FPTAS), i.e. a family of $(1+\epsilon)$-approximation algorithms which are polynomial in the input size and $1/\epsilon$, $\epsilon>0$. In order to construct the FPTAS, we will use the results obtained in~\cite{PY00} and~\cite{MS08}.

Let us fix $\epsilon>0$ and let $P_{\epsilon}(\Phi)$ be the set of solutions such that for all $X\in \Phi$, there is $Y\in P_{\epsilon}(\Phi)$ such that $F(Y,\pmb{c}_j)\leq (1+\epsilon)\,F(X,\pmb{c}_j)$ for all $j\in [K]$.
 We now recall the definition  of an \emph{exact problem} associated with~$\mathcal{P}$ (see~\cite{MS08}). 
 Given a vector $(v_1,\dots, v_K)$, we ask if there  is a solution $X\in \Phi$ such
  that $F(X,\pmb{c}_j)=v_j$ for all $j\in [K]$.
   Basing on the results obtained in~\cite{PY00}, it was proven in~\cite{MS08} that
if the exact problem associated with~$\mathcal{P}$ can be solved in pseudopolynomial time, then for any $\epsilon>0$, the set $P_{\epsilon}(\Phi)$ can be determined in time polynomial in the input size and $1/\epsilon$. This implies the following result:
\begin{thm}
\label{thmfptas}
If the exact problem associated with~$\mathcal{P}$ can be solved in pseudopolynomial time, then $\textsc{Min-Owa}~\mathcal{P}$ admits an FPTAS.
\end{thm}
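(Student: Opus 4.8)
The plan is to use the set $P_{\epsilon}(\Phi)$ directly as a polynomial-size ``cover'' of the feasible solutions, enumerate it, and return the best solution it contains with respect to $\mathrm{OWA}$. First I would invoke the result quoted just before the theorem: since the exact problem associated with $\mathcal{P}$ is solvable in pseudopolynomial time, the set $P_{\epsilon}(\Phi)$ can be computed in time polynomial in the input size and $1/\epsilon$; in particular $|P_{\epsilon}(\Phi)|$ is polynomially bounded. The algorithm then simply computes $P_{\epsilon}(\Phi)$, evaluates $\mathrm{OWA}(Y)$ for every $Y\in P_{\epsilon}(\Phi)$ (each evaluation requires sorting $K$ numbers, hence $O(K\log K)$ time, which is constant since $K$ is fixed), and outputs the $Y$ minimizing $\mathrm{OWA}$.

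The correctness argument is the heart of the proof and rests on the monotonicity of OWA together with its homogeneity. Let $X^*$ be an optimal solution of $\textsc{Min-Owa}~\mathcal{P}$. By the defining property of $P_{\epsilon}(\Phi)$ there is some $Y\in P_{\epsilon}(\Phi)$ with $F(Y,\pmb{c}_j)\le (1+\epsilon)F(X^*,\pmb{c}_j)$ for every $j\in[K]$. Since $\mathrm{owa}$ is monotonic we get
$$\mathrm{OWA}(Y)=\mathrm{owa}(F(Y,\pmb{c}_1),\dots,F(Y,\pmb{c}_K))\le \mathrm{owa}((1+\epsilon)F(X^*,\pmb{c}_1),\dots,(1+\epsilon)F(X^*,\pmb{c}_K)).$$
Because $\mathrm{owa}$ is positively homogeneous of degree one — scaling every argument by $(1+\epsilon)$ scales the ordered values and hence the weighted sum by $(1+\epsilon)$, the ordering permutation $\sigma$ being unchanged — the right-hand side equals $(1+\epsilon)\,\mathrm{OWA}(X^*)$. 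Thus $\mathrm{OWA}(Y)\le (1+\epsilon)\,\mathrm{OWA}(X^*)$, and since the algorithm returns a solution no worse than $Y$, it returns a $(1+\epsilon)$-approximate solution. Together with the polynomial running time established above, this gives the claimed FPTAS.

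The main obstacle is not really in the correctness step, which is a short consequence of monotonicity and homogeneity, but in making sure the quoted construction of $P_{\epsilon}(\Phi)$ is applicable as stated; this, however, is exactly the content of the result of~\cite{PY00} and~\cite{MS08} that we are permitted to assume, so the only genuine work left is to observe that OWA is scale-invariant in the sense just used. One minor point to handle is the case $F(X^*,\pmb{c}_j)=0$ for some $j$, but then the covering inequality forces $F(Y,\pmb{c}_j)=0$ as well, so no difficulty arises; and if $\mathrm{OWA}(X^*)=0$ the approximation guarantee holds trivially with any $Y$ achieving value $0$.
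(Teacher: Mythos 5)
Your proposal is correct and follows essentially the same route as the paper: compute $P_{\epsilon}(\Phi)$ via the result of~\cite{PY00,MS08}, return the member minimizing $\mathrm{OWA}$, and conclude via monotonicity (plus the implicit positive homogeneity) that $\mathrm{OWA}(Y)\leq (1+\epsilon)\,\mathrm{OWA}(X^*)$. Your version is in fact slightly more careful than the paper's in spelling out the homogeneity step and the degenerate zero-cost cases.
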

\begin{proof}
	Let us fix $\epsilon>0$ and let $Y$ be a solution of the minimum value of $\mathrm{OWA}(Y)$ among all the solutions in $P_{\epsilon}(\Phi)$. From the  results obtained in~\cite{MS08, PY00}, it follows that we can find $Y$ in time polynomial in the input size and $1/\epsilon$. Assume that $X^*$ is an optimal solution to \textsc{Min-OWA}~$\mathcal{P}$. Define vector $\pmb{v}^*=((1+\epsilon)F(X^*,\pmb{c}_1),\dots,(1+\epsilon)F(X^*,\pmb{c}_K))$. By the definition of $Y$ we get $F(Y,\pmb{c}_j)\leq (1+\epsilon)F(X^*,\pmb{c}_j)$ for all $j\in [K]$. The monotonicity of  OWA  implies ${\rm OWA(Y)}\leq {\rm owa}(\pmb{v}^*)=(1+\epsilon){\rm OWA}(X^*)$.  We have thus obtained an FPTAS for $\textsc{Owa}~\mathcal{P}$. 
\end{proof}

It turns out that the exact problem associated with~$\mathcal{P}$ can be solved in pseudopolynomial time for some particular problems~$\mathcal{P}$, provided that 
the number of scenarios~$K$ is constant.
This is the case for \textsc{Shortest Path},  \textsc{Minimum Spanning Tree} and some other problems described, for example, in~\cite{ABV10}. However,
it is worth pointing out that
  the running time of the FPTAS's obtained is exponential in $K$, so their practical applicability is limited to very small values of $K$. In the next section we construct approximation algorithms which are much faster and can be applied to problems with large number of scenarios.

\section{The problem with unbounded scenario set}
\label{secunb}

In this section we examine  the case, when the number of scenarios is unbounded, i.e. it is a part of the input. We discuss the  complexity and approximability of 
$\textsc{Min-Owa}~\mathcal{P}$ depending on various weight distributions. As we know from the results for \textsc{Min-Max}~$\mathcal{P}$, \textsc{Min-Owa}~$\mathcal{P}$  is not approximable within any constant factor for many basic problems $\mathcal{P}$, for example when $\mathcal{P}$ is \textsc{Shortest Path}. However, we can try to construct  approximation algorithms whose worst case ratio is a function of the number of scenarios~$K$. It turns out that 
the existence of such algorithms depends on the ordering of weights in the OWA operator, i.e. whether the weights are nonincreasing or nondecreasing. We thus study first these two types of weight distributions.

\subsection{Nonincreasing weights}
\label{secnonin}

Suppose  that the weights are nonincreasing, i.e. $w_1\geq w_2\geq \dots \geq w_K$. Notice that this case contains both the maximum and the average criteria  as special and boundary cases. Furthermore, it holds $w_1\geq 1/K$, because the weights must sum up to~1. The nonincreasing weights can be used if the idea of robust optimization is adopted. Namely, a decision maker assigns larger weights to larger solution costs. In the extreme case this leads to the maximum criterion, where only the largest solution cost is taken into account. The analysis of the case with~2 scenarios (Section~\ref{sec2scen}) shows that  \textsc{Min-Owa Shortest Path}  is NP-hard for all nonincreasing weight distributions except for the uniform one, when the weights are equal.

We  now construct an approximation algorithm for \textsc{Min-Owa}~$\mathcal{P}$ whose idea is to aggregate the costs of each element $e_i\in E$ by using the OWA operator and compute then an optimal solution for the aggregated costs. Consider element $e_i\in E$ and let 
$\hat{c}_{i1}\geq \hat{c}_{i2}\geq \dots \geq \hat{c}_{iK}$ be the ordered costs of $e_i$.  Let $\hat{c}_i=\sum_{j\in [K]} w_j \hat{c}_{ij}$ be the aggregated cost of $e_i$ and $\hat{C}(X)=\sum_{e_i\in X} \hat{c}_i$. Let $\hat{X}$ be a solution minimizing $\hat{C}(X)$. Of course, $\hat{X}$ can be computed in polynomial time if $\mathcal{P}$ is polynomially solvable.  
The following theorem is a generalization of Theorem~\ref{apprminmax}.
\begin{thm}
\label{thm1}
If the weights are nonincreasing, then $\mathrm{OWA}(\hat{X})\leq w_1 K\cdot\mathrm{OWA}(X)$ for any $X\in \Phi$ and the bound it tight.
\end{thm}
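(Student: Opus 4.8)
The plan is to sandwich $\mathrm{OWA}(\hat{X})$ between the chain of inequalities
$$\mathrm{OWA}(\hat{X})\ \le\ \hat{C}(\hat{X})\ \le\ \hat{C}(X)\ \le\ w_1\sum_{j\in[K]}F(X,\pmb{c}_j)\ \le\ w_1 K\cdot \mathrm{OWA}(X),$$
valid for every $X\in\Phi$, and then to exhibit a family of instances on which the ratio tends to $w_1K$.

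For the first three inequalities I would argue as follows. Since $w_1\ge w_2\ge\cdots\ge w_K$, the rearrangement inequality gives ${\rm owa}(f_1,\dots,f_K)=\max_{\pi}\sum_{j\in[K]}w_j f_{\pi(j)}$, the maximum being over all permutations $\pi$ of $[K]$; thus ${\rm owa}(\cdot)$ is a pointwise maximum of linear functionals, hence subadditive. Noting that $\hat{c}_i={\rm owa}(c_{i1},\dots,c_{iK})$ and that $(F(Y,\pmb{c}_1),\dots,F(Y,\pmb{c}_K))=\sum_{e_i\in Y}(c_{i1},\dots,c_{iK})$, subadditivity yields $\mathrm{OWA}(Y)\le\sum_{e_i\in Y}\hat{c}_i=\hat{C}(Y)$ for every $Y\in\Phi$; taking $Y=\hat{X}$ gives the first inequality. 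The second is the optimality of $\hat{X}$ for the aggregated costs. For the third, $w_j\le w_1$ and nonnegativity of the costs give $\hat{c}_i=\sum_j w_j\hat{c}_{ij}\le w_1\sum_j\hat{c}_{ij}=w_1\sum_j c_{ij}$, and summing over $e_i\in X$ and exchanging the order of summation gives $\hat{C}(X)\le w_1\sum_{j\in[K]}F(X,\pmb{c}_j)$.

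The only step that is not pure bookkeeping is the last inequality, i.e.\ that an OWA with nonincreasing normalized weights is at least the plain average, $\mathrm{OWA}(X)\ge\tfrac1K\sum_{j\in[K]}F(X,\pmb{c}_j)$. Writing $F_{[1]}\ge\cdots\ge F_{[K]}$ for the sorted scenario costs of $X$ and putting $u_j=w_j-\tfrac1K$, one has $u_1\ge\cdots\ge u_K$ and $\sum_j u_j=0$, so the partial sums $U_m=\sum_{j\le m}u_j$ satisfy $U_0=U_K=0$ and form a concave sequence, whence $U_m\ge0$ for all $m$; Abel summation then gives $\sum_j u_j F_{[j]}=\sum_{m=1}^{K-1}U_m\bigl(F_{[m]}-F_{[m+1]}\bigr)\ge0$, which is the claim. (Equivalently, this is Chebyshev's sum inequality for the similarly ordered sequences $(w_j)_j$ and $(F_{[j]})_j$, together with $\sum_j w_j=1$.) Hence $\sum_j F(X,\pmb{c}_j)\le K\cdot\mathrm{OWA}(X)$, and chaining the four inequalities proves $\mathrm{OWA}(\hat{X})\le w_1 K\cdot\mathrm{OWA}(X)$.

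For tightness I would use the \textsc{Shortest Path} instance with two $s$--$t$ routes: a single arc $e_0$ in parallel with a series of arcs $e_1,\dots,e_K$. Give $e_0$ cost $K-\varepsilon$ under $\pmb{c}_1$ and $0$ under $\pmb{c}_2,\dots,\pmb{c}_K$, and give $e_l$ cost $1$ under $\pmb{c}_l$ and $0$ otherwise, $l\in[K]$. Then $\hat{c}_{e_0}=w_1(K-\varepsilon)$ and $\hat{c}_{e_l}=w_1$, so $\hat{X}=\{e_0\}$, with $\mathrm{OWA}(\hat{X})={\rm owa}(K-\varepsilon,0,\dots,0)=w_1(K-\varepsilon)$, whereas $X=\{e_1,\dots,e_K\}$ has $F(X,\pmb{c}_j)=1$ for all $j$, so $\mathrm{OWA}(X)=1$; letting $\varepsilon\to0^+$ drives the ratio to $w_1K$ (and $\varepsilon=0$ with the natural tie-break attains it exactly). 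The main obstacle is thus confined to the last inequality — establishing ``nonincreasing weights $\Rightarrow$ value $\ge$ average''; the rest reduces to the standard rearrangement/subadditivity property of OWA, the optimality of $\hat{X}$, and a change in summation order, and I would make sure that nonnegativity of the costs is invoked exactly where it is needed (first and third inequalities) and that the normalization $\sum_j w_j=1$ is what turns the last step into the clean factor $K$.
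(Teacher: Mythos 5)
Your proof is correct and follows essentially the same route as the paper: the identical chain $\mathrm{OWA}(\hat{X})\le\hat{C}(\hat{X})\le\hat{C}(X)\le w_1\sum_j F(X,\pmb{c}_j)\le w_1K\cdot\mathrm{OWA}(X)$, with the same key lemma that OWA with nonincreasing normalized weights dominates the average (which the paper merely asserts and you justify via Abel summation), and a tightness instance of the same flavor (one solution concentrating all cost on a single scenario against one spreading unit cost over all scenarios), realized as a shortest-path gadget rather than the paper's selection instance. The only quibble is that nonnegativity of the costs is genuinely needed in the third inequality but not in the first, where rearrangement/subadditivity suffices for arbitrary reals.
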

\begin{proof}
Let $\sigma$ be a sequence of $[K]$ such that $F(\hat{X},\pmb{c}_{\sigma(1)})\geq \dots \geq F(\hat{X},\pmb{c}_{\sigma(K)})$.
 From the definition of the OWA operator and the assumption that the weights are nonincreasing, 
 we obtain:
\begin{equation}
\label{f1} 
\mathrm{OWA}(\hat{X})=\sum_{j\in [K]} w_j \sum_{e_i\in \hat{X}} c_{i\sigma(j)}=\sum_{e_i\in \hat{X}} \sum_{j\in [K]} w_j c_{i\sigma(j)}\leq \sum_{e_i\in \hat{X}} \sum_{j\in [K]} w_j \hat{c}_{ij}=\hat{C}(\hat{X}).
\end{equation} 	
From the definition of $\hat{X}$ and the fact that $w_1$ is the largest weight we obtain:
\begin{equation}
\label{f2}
\hat{C}(\hat{X})\leq \hat{C}(X)=\sum_{e_i\in X}\sum_{j\in[K]}w_j\hat{c}_{ij}\leq w_1\sum_{e_i\in X} \sum_{j\in [K]}  c_{ij}
\end{equation} and, again from the assumption that the weights are nonincreasing we get:
\begin{equation}
\label{f3}
\mathrm{OWA}(X)\geq \sum_{j\in [K]} \frac{1}{K} F(X,\pmb{c}_{\sigma(j)})=\frac{1}{K} \sum_{e_i\in X}\sum_{j\in [K]} c_{ij}.
\end{equation}
Finally, combining~(\ref{f1}), (\ref{f2}) and~(\ref{f3}) yields $\mathrm{OWA}(\hat{X})\leq w_1 K\cdot \mathrm{OWA}(X)$. 

In order to prove that the bound is tight consider the problem where $E=\{e_1,\dots, e_{2K}\}$ and $\Phi=\{X\subseteq E: |X|=K\}$. The cost scenarios are shown in Table~\ref{tab1}. 

\begin{table}[ht]
\centering
\caption{A hard example for the approximation algorithm.} \label{tab1}
\begin{tabular}{l|lllllll}
   & $\pmb{c}_1$ & $\pmb{c}_2$ & $\pmb{c}_3$ & $\dots$ & $\pmb{c}_K$ \\ \hline
$e_1$ & 0 & 0 & 0 & $\dots$ & $1$ \\
$e_2$ & 0 & 0 & 0 & $\dots$ & $1$ \\
$\vdots$ \\
$e_K$ & 0 & 0 & 0 & $\dots$ & $1$ \\ \hline
$e_{K+1}$ & $1$ & 0 & 0 & $\dots$ & 0 \\
$e_{K+2}$ & 0 & $1$ & 0 & $\dots$ & 0\\
$\vdots$ \\
$e_{2K}$ & 0 & 0 & 0 & $\dots$ & $1$
\end{tabular}
\end{table}
Observe that all the elements have the same aggregated costs for any weights $w_1,\dots, w_K$. Hence, we may choose any feasible solution as $\hat{X}$. If $\hat{X}=\{e_1,\dots,e_K\}$, then $\mathrm{\rm OWA}(\hat{X})=w_1K$. But if $X=\{e_{2K+1},\dots,e_{2K}\}$, then $\mathrm{OWA}(X)=\sum_{j\in [K]} w_j =1$ and so 
$\mathrm{OWA}(\hat{X})=w_1 K\cdot {\rm OWA}(X)$.
\end{proof}
Theorem~\ref{thm1} leads to the following corollary:
\begin{cor}
\label{coro1}
	If the weights are nonincreasing and $\mathcal{P}$ is polynomially solvable, then \textsc{Min-Owa}~$\mathcal{P}$ is approximable within $w_1K$.
\end{cor}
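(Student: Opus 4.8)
The plan is to observe that Corollary~\ref{coro1} is an immediate consequence of Theorem~\ref{thm1} once the routine complexity bookkeeping is in place. First I would check that the candidate solution $\hat{X}$ can be produced in polynomial time: for each element $e_i\in E$ one sorts its $K$ cost values to obtain $\hat{c}_{i1}\geq\dots\geq\hat{c}_{iK}$, forms the scalar aggregated cost $\hat{c}_i=\sum_{j\in[K]}w_j\hat{c}_{ij}$, and then computes a solution $\hat{X}$ minimizing $\hat{C}(X)=\sum_{e_i\in X}\hat{c}_i$. The sorting costs $O(nK\log K)$ time, and since $\mathcal{P}$ is polynomially solvable, finding $\hat{X}$ for the deterministic cost vector $(\hat{c}_1,\dots,\hat{c}_n)$ also takes polynomial time; hence the whole procedure is polynomial.

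Next I would apply Theorem~\ref{thm1} with $X$ chosen to be an optimal solution $X^*$ of \textsc{Min-Owa}~$\mathcal{P}$. The theorem yields $\mathrm{OWA}(\hat{X})\leq w_1K\cdot\mathrm{OWA}(X^*)$, and since $\mathrm{OWA}(X^*)$ is the optimum value of \textsc{Min-Owa}~$\mathcal{P}$, this exhibits the polynomially computable feasible solution $\hat{X}$ as one whose objective value is within a factor $w_1K$ of the optimum, which is precisely the claim that \textsc{Min-Owa}~$\mathcal{P}$ is approximable within $w_1K$.

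There is no genuine obstacle here; the only point worth a remark is that the ratio is meaningful, i.e.\ $w_1K\geq 1$, which holds because $w_1\geq 1/K$ whenever the $K$ nonnegative weights sum to $1$ and $w_1$ is the largest of them. It is also instructive to record the two extremes: when the weights are uniform we get $w_1K=1$, so the procedure solves \textsc{Min-Average}~$\mathcal{P}$ exactly, whereas when $w_1=1$ the factor becomes $K$ and we recover Theorem~\ref{apprminmax}.
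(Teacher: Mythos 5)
Your proposal is correct and follows exactly the paper's (implicit) argument: the paper notes just before Theorem~\ref{thm1} that $\hat{X}$ is polynomial-time computable when $\mathcal{P}$ is, and the corollary then follows by instantiating the theorem at an optimal solution. Your added remarks on $w_1\geq 1/K$ and the two extreme cases match the discussion the paper gives right after the corollary.
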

Let us focus on some consequences of  Corollary~\ref{coro1}. 
Since the weights are nonincreasing, $w_1\in[1/K,1]$. Thus,
if $w_1=1$, i.e. when OWA becomes the maximum, we get the $K$-approximation algorithm, which is known in the literature
(see, e.g.,~\cite{ABV09}). On the other hand, if $w_1=1/K$, i.e. 
when  OWA becomes the average, $\hat{X}$ is an optimal solution to \textsc{Min-Owa}~$\mathcal{P}$. Therefore,  the more uniform is the weight distribution the better is the approximation ratio of the algorithm. 

In the proof of Theorem~\ref{thm1} we have assumed that we are able to solve the deterministic problem $\mathcal{P}$ in polynomial time. Of course, this is not true  for many combinatorial optimization problems which are NP-hard even in the deterministic case. However, in this case we often know a $\gamma$-approximation algorithm for $\mathcal{P}$, for some $\gamma>1$. We can  modify inequalities~(\ref{f2}) and write $\hat{C}(\hat{X})\leq \gamma w_1 \sum_{e_i\in X}\sum_{j\in [K]} c_{ij}$. As a result we get ${\rm OWA}(\hat{X})\leq w_1\gamma K \cdot {\rm OWA}(X)$ for any $X\in \Phi$, which leads to the following corollary:
\begin{cor}
If the weights are nonincreasing and $\mathcal{P}$ is approximable within 
$\gamma>1$, then \textsc{Min-Owa}~$\mathcal{P}$ is approximable within $w_1\gamma K$.
\end{cor}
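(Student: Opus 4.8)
The plan is to mimic the proof of Theorem~\ref{thm1} almost verbatim, replacing the exact minimizer of the aggregated cost by an approximate one. First I would run the assumed $\gamma$-approximation algorithm for $\mathcal{P}$ on the auxiliary deterministic instance whose costs are the (nonnegative) aggregated costs $\hat{c}_i=\sum_{j\in[K]}w_j\hat{c}_{ij}$, obtaining a solution $\hat{X}\in\Phi$ with $\hat{C}(\hat{X})\leq\gamma\,\hat{C}(X)$ for every $X\in\Phi$. This is the only place where the hypothesis on $\mathcal{P}$ is used, and it is legitimate because ``approximable within $\gamma$'' is understood uniformly over arbitrary nonnegative cost vectors — exactly the convention under which polynomial solvability of $\mathcal{P}$ was invoked for the choice of $\hat{X}$ in Theorem~\ref{thm1}.

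Next I would observe that inequality~(\ref{f1}), namely $\mathrm{OWA}(\hat{X})\leq\hat{C}(\hat{X})$, uses no optimality of $\hat{X}$: it relies only on the weights being nonincreasing together with the fact that, for each fixed element $e_i$, the weighted sum $\sum_j w_j\hat{c}_{ij}$ of the ordered costs dominates the weighted sum $\sum_j w_j c_{i\sigma(j)}$ obtained from any other ordering $\sigma$ of the scenarios. Hence~(\ref{f1}) remains valid here. Likewise inequality~(\ref{f3}), $\mathrm{OWA}(X)\geq\frac{1}{K}\sum_{e_i\in X}\sum_{j\in[K]}c_{ij}$, is a generic property of any $X\in\Phi$ and of nonincreasing weights, so it carries over unchanged.

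The only modification is to~(\ref{f2}): combining $\hat{C}(\hat{X})\leq\gamma\,\hat{C}(X)$ with $\hat{c}_{ij}\leq\sum_{k\in[K]}c_{ik}$ (each ordered cost is at most the sum of all costs of $e_i$), the fact that $w_1$ is the largest weight, and $\sum_{j}w_j=1$, I get
\begin{equation*}
\hat{C}(\hat{X})\leq\gamma\,\hat{C}(X)=\gamma\sum_{e_i\in X}\sum_{j\in[K]}w_j\hat{c}_{ij}\leq\gamma w_1\sum_{e_i\in X}\sum_{j\in[K]}c_{ij}.
\end{equation*}
Chaining this with the unchanged bounds~(\ref{f1}) and~(\ref{f3}) gives $\mathrm{OWA}(\hat{X})\leq\gamma w_1\sum_{e_i\in X}\sum_{j\in[K]}c_{ij}\leq\gamma w_1 K\cdot\mathrm{OWA}(X)$ for every $X\in\Phi$; instantiating $X$ as an optimal solution $X^{*}$ of \textsc{Min-Owa}~$\mathcal{P}$ yields the claimed ratio $w_1\gamma K$, with $\hat{X}$ computable in polynomial time.

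There is essentially no hard step: the corollary is a routine robustification of Theorem~\ref{thm1}, and the whole argument amounts to tracing which of the inequalities~(\ref{f1})--(\ref{f3}) depend on exact solvability (only~(\ref{f2})) and inserting the factor $\gamma$ there. The single point I would be careful to state explicitly is the transfer of the $\gamma$-guarantee to the aggregated-cost instance, as noted above; beyond that, no matching lower-bound instance is needed since — unlike Theorem~\ref{thm1} — the statement makes no tightness claim.
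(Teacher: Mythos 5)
Your proposal is correct and follows exactly the paper's own argument: the paper likewise keeps inequalities~(\ref{f1}) and~(\ref{f3}) unchanged and only inserts the factor $\gamma$ into~(\ref{f2}), yielding $\hat{C}(\hat{X})\leq \gamma w_1\sum_{e_i\in X}\sum_{j\in[K]}c_{ij}$ and hence the ratio $w_1\gamma K$. Your explicit remark that the $\gamma$-guarantee must apply to the aggregated-cost instance is a reasonable clarification of a point the paper leaves implicit, but it does not change the argument.
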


\subsection{Nondecreasing weights}
\label{secnond}

Assume now that the weights are nondecreasing, i.e. $w_1\leq w_2\leq \dots \leq w_K$. Notice that this case contains both the minimum and average criteria as special cases. The following theorem shows that this case is much harder than the one with nonincreasing weights.
\begin{thm}
\label{thm2}
	Assume that the weights are nondecreasing and $K$ is unbounded. Then \textsc{Min-Owa Shortest Path} is not at all approximable  unless $P=NP$.
\end{thm}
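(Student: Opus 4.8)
The plan is to reduce from an NP-hard problem in a way that forces the OWA value of any feasible $s$--$t$ path to be $0$ precisely when a "yes"-instance solution exists, while any other path gets a strictly positive (indeed large) OWA value. The natural candidate is a reduction in which the number of scenarios $K$ equals the number of items in the source instance, so that unboundedness of $K$ is genuinely used. Since the weights are nondecreasing, $w_K$ is the largest weight, and more to the point $w_1$ — the weight multiplying the \emph{largest} cost — may be $0$. This is the structural weakness to exploit: a solution whose cost is $0$ under all but a few scenarios can have OWA value $0$ even if its cost under the remaining scenarios is huge, as long as the number of "bad" scenarios is at most the number of leading zero weights. Conversely, to get inapproximability (not just NP-hardness) we want the optimum to be exactly $0$ on "yes"-instances, so that any multiplicative approximation would have to return an optimal solution and hence decide the NP-hard problem.

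Concretely, I would start from \textsc{Partition} (or, to make $K$ scale, a problem like \textsc{Vertex Cover} / \textsc{3-Dimensional Matching}, or even \textsc{Partition} with a gadget per integer) and build a graph with parallel "choice" gadgets, one per scenario index, mirroring the construction already used in Corollary~\ref{thm1a} and Figure~\ref{fig2}: each gadget $j$ offers two arcs, one of which has cost $M$ (a large number) under scenario $\pmb{c}_j$ and cost $0$ under all other scenarios, and the other the reverse pattern, with all auxiliary/dummy arcs costing $0$ everywhere. The key point is to arrange scenarios and weights so that: (i) a path corresponding to a valid solution of the source instance has cost $0$ under \emph{every} scenario (so $\mathrm{OWA} = 0$), and (ii) every other path has cost $\geq M$ under at least $t+1$ scenarios, where $t$ is the number of leading zero weights, forcing $\mathrm{OWA} \geq w_{?}\cdot M > 0$. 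One then picks, say, the \emph{nonincreasing}-in-reverse pattern — e.g. $K$ scenarios with weights $w_1 = \cdots = w_{K-1} = 0$, $w_K = 1$ would be \textsc{Min-Min} and too easy; instead something like $w_1=0$, $w_j = 1/(K-1)$ for $j\geq 2$ — so that a single nonzero scenario-cost is ignored but two or more are not. The arithmetic of "how many scenarios can be nonzero while keeping OWA $=0$" versus "how many a bad path is forced to violate" is what the graph gadget must be tuned to.

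The main obstacle, and the delicate part of the construction, is coordinating three things simultaneously: (a) the combinatorial structure of $\Phi$ (that it really is the set of $s$--$t$ paths of some graph, and that the intended solutions are exactly the paths encoding source-instance solutions), (b) the scenario costs, which must be $0$-entries everywhere except in a controlled pattern, and (c) the weight vector, which must be a legitimate nondecreasing probability vector whose \emph{leading} zeros are few enough that a "bad" path cannot hide its $M$-costs among the ignored (largest) positions. In particular one must ensure that for a non-solution path the number of scenarios under which its cost is large is strictly greater than the number of leading zero weights, so at least one large cost lands on a strictly positive weight; simultaneously the intended path must achieve cost $0$ under all $K$ scenarios. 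If the source problem is \textsc{Partition} this forces each "mismatch" of the partition to light up \emph{two} scenarios, which is easy to arrange by doubling the scenario set ($K = 2n$, with scenarios $2i-1$ and $2i$ both sensitive to item $i$) and using uniform weights on positions $2,\dots,K$ with $w_1 = 0$.

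Once the gadget is fixed, verification is routine: on a "yes"-instance the encoded path has all scenario-costs $0$, hence $\mathrm{OWA} = 0$; on a "no"-instance every path has at least two (hence $\geq t+1$) scenario-costs equal to $M$, so the ordered cost vector has at least one $M$ in a position with positive weight, giving $\mathrm{OWA} \geq cM$ for an explicit constant $c>0$; and since the optimum is $0$ iff the instance is a "yes"-instance, any approximation algorithm with any finite ratio would distinguish the two cases, contradicting $P \neq NP$. Finally, by the transformation described after Corollary~\ref{thm1a}, the same conclusion transfers to \textsc{Minimum Spanning Tree}, \textsc{Minimum s-t Cut}, and \textsc{Minimum Assignment}.
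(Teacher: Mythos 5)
Your high-level strategy is the right one and matches the paper's: exploit the leading zero weights of a nondecreasing weight vector so that a ``yes''-instance path attains $\mathrm{OWA}=0$ while every other path is forced to put a positive cost into a position with positive weight, and then conclude inapproximability from the zero/nonzero gap. However, the proposal has a genuine gap exactly where you flag the difficulty: the gadget is never pinned down, and the candidate you lean on, \textsc{Partition}, cannot be made to work. \textsc{Partition}'s hardness is a global numeric-balance condition, so in the standard reduction \emph{every} path (including the one encoding a valid partition) has strictly positive cost under both scenarios; there is no path with cost $0$ under all scenarios, and doubling the scenario set per item does not change this. The resulting gap is of the form ``$\mathrm{OWA}\leq S$ versus $\mathrm{OWA}>S$'', which yields NP-hardness but not the ``not at all approximable'' statement. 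To get optimum $=0$ on yes-instances you need a source problem whose yes-condition is ``few constraints are triggered,'' which is a covering/satisfiability structure, not a subset-sum structure.

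The paper's proof supplies precisely this missing ingredient: it reduces from \textsc{Min 3-Sat} (is there an assignment satisfying at most $L$ clauses?), builds one scenario per clause with cost $1$ on the arcs of that clause's literals, and --- crucially --- sets the number of leading zero weights equal to the instance parameter $L$, i.e.\ $w_1=\dots=w_L=0$ and $w_{L+1}=\dots=w_K=1/(K-L)$ with $K=m$. A path triggers scenario $\pmb{c}_j$ exactly when its assignment satisfies clause $C_j$, so a yes-instance gives a path with at most $L$ positive scenario costs (all absorbed by the zero weights, hence $\mathrm{OWA}=0$), while in a no-instance every path has more than $L$ positive costs and $\mathrm{OWA}\geq 1/(K-L)>0$. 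Your sketch instead fixes the number of leading zeros to a constant ($t=1$, via $w_1=0$), which is too rigid: the threshold must scale with the instance. So the plan is sound in outline but the central construction --- the choice of source problem and the coupling of $L$ to the weight vector --- is missing, and the concrete instantiation you offer would fail.
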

\begin{proof}
	We make use of the following \textsc{Min 3-Sat} problem, which is known to be NP-complete~\cite{AZ02, KM94}. We are given boolean variables $x_1,\dots,x_n$ and a collection of clauses $C_1,\dots, C_m$, where each clause is a disjunction of at most three literals (variables or their negations). We ask if there is a 0-1 assignment to the variables which satisfies at most $L$ clauses. Given an instance of \textsc{Min 3-Sat}, we construct the graph shown in Figure~\ref{fig2} -- the same graph as in the proof of Theorem~\ref{thm1a}. The arcs $e_1,\dots,e_n$ correspond to literals $x_1,\dots,x_n$ and the arcs $f_1,\dots,f_n$ correspond to literals $\overline{x}_1,\dots,\overline{x}_n$. There is one-to-one correspondence between paths from $s$ to $t$ and 0-1 assignments to the variables. We fix $x_i=1$ if a path chooses $e_i$ and $x_i=0$ if a path chooses $f_i$.
The set $\Gamma$ is constructed as follows. For each clause $C_j=(l^j_1\vee l^j_2 \vee l^j_3)$,
$j\in [m]$, we form the cost scenario $\pmb{c}_j$ in which the costs of the arcs corresponding to
$l^j_1$, $l^j_2$ and $l^j_3$ are set to~$1$ and the costs of the remaining arcs are set to~0. We fix $w_1=\dots=w_{L}=0$ and $w_{L+1}=\dots=w_K=1/(K-L)$, where $K=m$. Notice that the weights are nondecreasing. Suppose that the answer to \textsc{Min 3-Sat} is yes. Then there is an assignment satisfying at most $L$ clauses. Consider the path $X$ corresponding to this assignment. From the construction of $\Gamma$ it follows that the cost of $X$ is positive under at most $L$ scenarios. In consequence $\mathrm{OWA}(X)=0$. On the other hand, if the answer to \textsc{Min 3-Sat} is no, then any assignment satisfies more than $L$ clauses and each path $X$ has a positive cost (not less than one) for more than $L$ scenarios. This implies $\mathrm{OWA}(X)\geq 1$ for all $X\in \Phi$. Accordingly to the above, we have: the answer to \textsc{Min 3-sat} is yes if and only if there is a path $X$ such that $\mathrm{OWA}(X)=0$. Hence the problem is not at all approximable unless P=NP.

This negative result remains true even if the element costs under all scenarios are positive. To see this it is enough to modify the construction of the scenario set as follows. Under scenario $\pmb{c}_j$, the costs of the arcs corresponding to the literals $l^j_1$, $l^j_2$ and $l^j_3$ are set to~$(n+1)(K-L)\rho(|I|)$, for any polynomially computable function $\rho(|I|)$ of the input size $|I|$, and the costs of the remaining arcs are set to~1. Now, if the answer to~\textsc{Min~3-Sat} is yes then there is a path $X$ such that ${\rm OWA}(X)\leq n$, and if the answer is no, then for all paths $X$ it holds ${\rm OWA}(X)\geq (n+1)\rho(|I|)$. Consequently, the gap is $\rho(|I|)$ and no $\rho(|I|)$-approximation for the problem exists unless P=NP.

\end{proof}

It follows from Theorem~\ref{thm2} that for the general \textsc{Min-Owa}~$\mathcal{P}$ problem there is no approximation algorithm with a worst case ratio bounded by a polynomially computable function of $K$. This is contrary to the special case of the problem with nonincreasing weights, where such an algorithm exists (see Section~\ref{secnonin}).

\subsection{The $k$th largest cost criterion}

In some applications, we wish to minimize the $k$th largest solution cost, in particular the median when $k=\lfloor K/2 \rfloor+1$. This leads to the \textsc{Min-Quant}$(k)$~$\mathcal{P}$ and \textsc{Min-Median}~$\mathcal{P}$ problems, respectively. The complexity of \textsc{Min-Quant}$(k)$~$\mathcal{P}$ depends on two parameters, namely $k$ and $K$, which can be constant or unbounded. It is clear the \textsc{Min-Quant(1)}~$\mathcal{P}$ is NP-hard and \textsc{Min-Quant($K$)}~$\mathcal{P}$ is polynomially solvable, for example when $\mathcal{P}$ is the \textsc{Shortest Path}, since the former is  \textsc{Min-Max}~$\mathcal{P}$ and the latter is \textsc{Min-Min}~$\mathcal{P}$. It is easy to show that \textsc{Min-Max}~$\mathcal{P}$ with $K$ scenarios is equivalent to \textsc{Min-Quant($k$)}~$\mathcal{P}$ with $K+k-1$ scenarios, where the first $K$ scenarios are the same as in \textsc{Min-Max}~$\mathcal{P}$ and under the remaining $k-1$ scenarios all elements have sufficiently large costs. Thus, in particular, \textsc{Min-Quant($k$) Shortest Path} is NP-hard for any constant $K\geq 2$ and any constant $k\in\{1,\dots,K-1\}$.  From the results obtained in Section~\ref{secfptas}, we know that  \textsc{Min-Quant($k$)}~$\mathcal{P}$ admits an FPTAS when $K$ is constant and the corresponding exact problem can be solved in pseudopolynomial time. 

We now investigate the case when $K$ is unbounded and $k$ is constant.  Observe that \textsc{Min-Quant($k$)}~$\mathcal{P}$ can be reduced to solving a family of $\binom{K}{k-1}$ \textsc{Min-Max}~$\mathcal{P}$ problems. It follows from the fact, that we can enumerate all subsets of $k-1$ scenarios, and for each such a subset, say $\Gamma'$,  we can compute an optimal solution to the corresponding \textsc{Min-Max}~$\mathcal{P}$ problem with scenario set $\Gamma \setminus \Gamma'$. One of the solutions computed must be optimal for \textsc{Min-Quant($k$)}~$\mathcal{P}$. In consequence, if \textsc{Min-Max}~$\mathcal{P}$ is approximable within $\gamma$, then the same result holds for \textsc{Min-Quant($k$)}~$\mathcal{P}$, provided that $k$ is constant. Since \textsc{Min-Max}~$\mathcal{P}$ is approximable within $K$, when $\mathcal{P}$ is polynomially solvable, we get the following result:
 
\begin{cor}
	If $\mathcal{P}$ is polynomially solvable and $k$ is constant, then \textsc{Min-Quant($k$)}~$\mathcal{P}$ is approximable within $K$.
\end{cor}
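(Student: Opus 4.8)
The plan is to turn the reduction sketched just above into a concrete algorithm and to verify its approximation ratio. \textbf{Step 1 (the algorithm).} For every $(k-1)$-element subset $\Gamma'\subseteq\Gamma$ I would form the \textsc{Min-Max}~$\mathcal{P}$ instance on the reduced scenario set $\Gamma\setminus\Gamma'$ and apply the algorithm of Theorem~\ref{apprminmax} to obtain a solution $Y_{\Gamma'}\in\Phi$ with
$$\max_{\pmb c\in\Gamma\setminus\Gamma'}F(Y_{\Gamma'},\pmb c)\ \le\ |\Gamma\setminus\Gamma'|\cdot\min_{X\in\Phi}\max_{\pmb c\in\Gamma\setminus\Gamma'}F(X,\pmb c)\ \le\ K\cdot\min_{X\in\Phi}\max_{\pmb c\in\Gamma\setminus\Gamma'}F(X,\pmb c).$$
Among all the $Y_{\Gamma'}$ produced, the algorithm returns the one with the smallest value of $\mathrm{OWA}(\cdot)$ (recall that for \textsc{Min-Quant($k$)}~$\mathcal{P}$ this is just the $k$th largest solution cost). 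Since $k$ is constant, there are $\binom{K}{k-1}=O(K^{k-1})$ subsets $\Gamma'$, so, given that $\mathcal{P}$ is polynomially solvable, the whole procedure runs in polynomial time.

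\textbf{Step 2 (the two key inequalities).} Fix a solution $X$ and a permutation $\pi$ of $[K]$ with $F(X,\pmb c_{\pi(1)})\ge\dots\ge F(X,\pmb c_{\pi(K)})$. First, for the "right" choice $\Gamma'_X=\{\pmb c_{\pi(1)},\dots,\pmb c_{\pi(k-1)}\}$ consisting of the $k-1$ worst scenarios for $X$, one has $\max_{\pmb c\in\Gamma\setminus\Gamma'_X}F(X,\pmb c)=F(X,\pmb c_{\pi(k)})=\mathrm{OWA}(X)$. Second, and this is the point that needs a short argument, for \emph{any} $(k-1)$-subset $\Gamma'$ and any solution $Y$, deleting only $k-1$ scenarios cannot remove all $k$ of the scenarios attaining the $k$ largest costs of $Y$; hence $\Gamma\setminus\Gamma'$ still contains at least one of them and therefore $\max_{\pmb c\in\Gamma\setminus\Gamma'}F(Y,\pmb c)\ge\mathrm{OWA}(Y)$.

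\textbf{Step 3 (combining).} Let $X^*$ be optimal for \textsc{Min-Quant($k$)}~$\mathcal{P}$ and apply Step 1 with $\Gamma'=\Gamma'_{X^*}$. Then
\begin{align*}
\mathrm{OWA}(Y_{\Gamma'})
&\le \max_{\pmb c\in\Gamma\setminus\Gamma'}F(Y_{\Gamma'},\pmb c)
\le K\cdot\min_{X\in\Phi}\max_{\pmb c\in\Gamma\setminus\Gamma'}F(X,\pmb c) \\
&\le K\cdot\max_{\pmb c\in\Gamma\setminus\Gamma'}F(X^*,\pmb c)
= K\cdot\mathrm{OWA}(X^*),
\end{align*}
where the first inequality is the second fact of Step 2, the second is the guarantee of Theorem~\ref{apprminmax}, the third uses optimality of the minimum over $\Phi$, and the equality is the first fact of Step 2. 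Since the algorithm returns a solution whose $\mathrm{OWA}$ value is at most $\mathrm{OWA}(Y_{\Gamma'})$, it is a $K$-approximation, which proves the corollary.

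The only genuinely non-routine point is the second inequality of Step 2 — one must check that removing $k-1$ scenarios always leaves a scenario among the $k$ most expensive ones for the solution under consideration, so that the surviving maximum still dominates the $k$th largest cost. Everything else is bookkeeping; in fact one could sharpen $K$ to $K-k+1$ using $|\Gamma\setminus\Gamma'|=K-k+1$, but $K$ already suffices for the statement.
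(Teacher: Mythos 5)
Your proposal is correct and follows exactly the paper's argument: enumerate all $\binom{K}{k-1}$ subsets $\Gamma'$ of $k-1$ scenarios, run the $K$-approximation for \textsc{Min-Max}~$\mathcal{P}$ on each reduced set $\Gamma\setminus\Gamma'$, and return the best solution found, with the key observations that the maximum over $\Gamma\setminus\Gamma'$ always dominates the $k$th largest cost and equals it when $\Gamma'$ consists of the $k-1$ worst scenarios of the fixed solution. You merely spell out the two inequalities that the paper leaves implicit in its ``in consequence'' step, and your remark that the ratio could be sharpened to $K-k+1$ is a small bonus beyond what the paper claims.
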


The approximation algorithm is efficient only when $k$ is close to 1 or close to $K$. Its running time is not polynomial when $k$ is unbounded, because $\binom{K}{k-1}$ is then exponential in $k$. 

We now study the case when both $K$ and $k$ are unbounded. 
Observe that this is the case, for example, when $K$ is unbounded and OWA is  median, because $k=\lfloor K/2 \rfloor +1$ is then a function of $K$. We prove the following negative result:
\begin{thm}
\label{thmmed}
	Let $K$ be unbounded. Then \textsc{Min-Median}
	 \textsc{Shortest Path} is not at all approximable unless $P=NP$.
\end{thm}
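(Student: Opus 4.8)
The plan is to adapt the reduction from \textsc{Min 3-Sat} used in the proof of Theorem~\ref{thm2}. Recall that \textsc{Min 3-Sat} asks, for a 3-CNF formula with $n$ variables $x_1,\dots,x_n$ and $m$ clauses $C_1,\dots,C_m$, whether some 0-1 assignment satisfies \emph{at most} $L$ clauses. First I would build exactly the graph of Figure~\ref{fig2}, in which $s$-$t$ paths are in one-to-one correspondence with 0-1 assignments ($x_i=1$ if the path uses $e_i$, $x_i=0$ if it uses $f_i$). For each clause $C_j$ I create a scenario $\pmb{c}_j$ whose cost equals $1$ on the arcs corresponding to the literals of $C_j$ and $0$ on all the remaining arcs; thus $F(X,\pmb{c}_j)>0$ exactly when the assignment $X$ satisfies $C_j$, and $F(X,\pmb{c}_j)=0$ otherwise.

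The key observation is that, since all costs are nonnegative integers, the $k$th largest of $F(X,\pmb{c}_1),\dots,F(X,\pmb{c}_K)$ equals $0$ if and only if at most $k-1$ of these numbers are positive. For \textsc{Min-Median} we have $k=\lfloor K/2\rfloor+1$, so $\mathrm{OWA}(X)=0$ iff $X$ has positive cost under at most $\lfloor K/2\rfloor$ scenarios. Because the index $k$ of the median is dictated by $K$, I cannot place it at $L+1$ as in Theorem~\ref{thm2}; instead I pad the scenario set so that the median threshold $\lfloor K/2\rfloor$ matches $L$ once the padding scenarios are accounted for. Concretely, if $2L\le m$ I add $m-2L$ ``heavy'' dummy scenarios in which every arc has cost $1$ (so every $s$-$t$ path is positive under each of them); then $K=2(m-L)$, $\lfloor K/2\rfloor=m-L$, and the number of scenarios on which a path $X$ is positive equals $\mathrm{sat}(X)+(m-2L)$, which is $\le\lfloor K/2\rfloor$ iff $\mathrm{sat}(X)\le L$. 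If instead $2L\ge m$, I add $2L-m$ ``null'' dummy scenarios in which every arc has cost $0$; then $K=2L$, $\lfloor K/2\rfloor=L$, and $X$ is positive on exactly $\mathrm{sat}(X)$ scenarios, which is $\le\lfloor K/2\rfloor$ iff $\mathrm{sat}(X)\le L$. In either case the given instance of \textsc{Min 3-Sat} is a ``yes''-instance if and only if there is an $s$-$t$ path $X$ with $\mathrm{OWA}(X)=0$.

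Since $\mathrm{OWA}(X)$ is a nonnegative integer, the optimum of \textsc{Min-Median Shortest Path} on the constructed instance is $0$ in the ``yes'' case and at least $1$ in the ``no'' case, so any polynomial-time algorithm achieving any finite approximation ratio would return value $0$ precisely for the ``yes''-instances and hence decide \textsc{Min 3-Sat}; this yields non-approximability unless $\mathrm{P}=\mathrm{NP}$. As in the proof of Theorem~\ref{thm2}, I would then strengthen the result to instances with strictly positive costs by rescaling: in each clause scenario replace the cost $1$ on the literal arcs by a large value such as $(n+1)\rho(|I|)$ for a polynomially computable function $\rho$, replace the cost $0$ on the other arcs by $1$, and inflate the dummy scenarios correspondingly, so that a ``yes''-instance produces a path of OWA value at most $n$ while a ``no''-instance forces OWA value at least $(n+1)\rho(|I|)$, excluding any $\rho(|I|)$-approximation. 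Finally, via the transformations described in Section~\ref{sec2scen}, the same hardness carries over to \textsc{Minimum Spanning Tree}, \textsc{Minimum s-t Cut} and \textsc{Minimum Assignment}.

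The step I expect to be the main obstacle is pinning down the padding arithmetic: verifying that the two cases $2L\le m$ and $2L\ge m$ exhaust all inputs and that, after accounting for the parity of $\lfloor K/2\rfloor$, the median threshold translates into exactly the bound $L$ on the number of satisfied clauses. Everything else is essentially a reprise of the argument in Theorem~\ref{thm2}.
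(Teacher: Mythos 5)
Your proposal is correct and follows essentially the same route as the paper's proof: the identical \textsc{Min 3-Sat} reduction, padded with all-ones scenarios when $2L\le m$ and all-zeros scenarios when $2L\ge m$, so that the median index $\lfloor K/2\rfloor+1$ lands exactly at the threshold $L$ (the paper uses the same scenario counts $2(m-L)$ and $2L$). The padding arithmetic you flag as the main obstacle checks out, and if anything your case split $2L\le m$ versus $2L\ge m$ covers the boundary value $L=\lfloor m/2\rfloor$ slightly more cleanly than the paper's strict inequalities.
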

\begin{proof}
	The reduction is very similar to that in the proof of Theorem~\ref{thm2}. It is enough to modify it as follows.
	Assume first that  $L<\lfloor m/2 \rfloor$. We then add to $\Gamma$ additional $m-2L$ scenarios with the costs equal to~1 for all the arcs. So the number of scenarios is $2m-2L$. We fix $w_{m-L+1}=1$ and $w_j=0$ for the remaining scenarios. Now, the answer to \textsc{Min 3-SAT} is yes, if and only if there is a path $X$ whose cost is 1 under at most $L+m-2L=m-L$ scenarios  and equivalently  $\mathrm{OWA}(X)=0$, which is due to the definition of the weights.
	Assume that $L>\lfloor m/2 \rfloor$. We then we add to $\Gamma$ additional $2L-m$ scenarios with the costs equal to~0 for all the arcs.  The number of scenarios is then $2L$. We fix $w_{L+1}=1$ and $w_j=0$ for all the remaining scenarios. Now, the answer to \textsc{Min 3-SAT} is yes, if and only if there is a path $X$ whose cost is 1 under at most $L$ scenarios. According to the definition of the weights, it is equivalent to $\mathrm{OWA}(X)=0$. We thus can see that it is NP-hard to check whether there is a path $X$ such that ${\rm OWA}(X)\leq 0$ and the theorem follows. Using a reasoning similar to that in the proof of Theorem~\ref{thm2}, we can show that the negative result remains true when all elements have positive costs under all scenarios.
\end{proof}

 Theorem~\ref{thmmed} states that there is no approximation 
algorithm for \textsc{Min-Quant}$(k)$~$\mathcal{P}$ whose worst case ratio is a polynomially computable function of $K$ and $k$. In consequence, minimizing the $k$th largest cost can be much harder than minimizing the largest cost.

\subsection{The Hurwicz criterion}

In Section~\ref{sec2scen} we have proved that when the number of scenarios equals~2, $\mathcal{P}$ is polynomially solvable and $\alpha\in [0,1/2]$, then \textsc{Min-Hurwicz}~$\mathcal{P}$ is polynomially solvable. We now show that this is no longer true when the number of scenarios is greater than~2. 
\begin{obs}
\label{thmhurr2} For any $\alpha\in(0,1]$,
there is a polynomial time approximation preserving reduction from
 \textsc{Min-Max}~$\mathcal{P}$ with $K$ scenarios to 
 \textsc{Min-Hurwicz}~$\mathcal{P}$ with $K+1$ scenarios.
\end{obs}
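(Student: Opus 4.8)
The plan is to give a direct reduction that simply pads the scenario set with one trivial scenario. Starting from an instance of \textsc{Min-Max}~$\mathcal{P}$ with element set $E$, feasible set $\Phi$ and scenarios $\pmb{c}_1,\dots,\pmb{c}_K$, I would build the \textsc{Min-Hurwicz}~$\mathcal{P}$ instance on the \emph{same} $E$ and $\Phi$, with scenario set $\Gamma'=\{\pmb{c}_1,\dots,\pmb{c}_K,\pmb{c}_{K+1}\}$, where $\pmb{c}_{K+1}=(0,\dots,0)$, and with Hurwicz weights $w_1=\alpha$, $w_{K+1}=1-\alpha$, $w_j=0$ otherwise. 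This is clearly computable in polynomial time, and it adds exactly one scenario.

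The first step is to compute what $\mathrm{OWA}$ becomes on this instance. For every $X\in\Phi$ we have $F(X,\pmb{c}_{K+1})=0$, and since all element costs are nonnegative, $F(X,\pmb{c}_j)\ge 0$ for all $j$; hence $\min_{j\in[K+1]}F(X,\pmb{c}_j)=0$ and $\max_{j\in[K+1]}F(X,\pmb{c}_j)=\max_{j\in[K]}F(X,\pmb{c}_j)$. Substituting into the Hurwicz form of $\mathrm{OWA}$ gives
\[
\mathrm{OWA}(X)=\alpha\max_{j\in[K]}F(X,\pmb{c}_j)+(1-\alpha)\cdot 0=\alpha\cdot F_{\max}(X),
\]
where $F_{\max}(X)=\max_{j\in[K]}F(X,\pmb{c}_j)$ is exactly the objective of the original \textsc{Min-Max}~$\mathcal{P}$ instance.

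Correctness of the reduction then follows at once: because $\alpha>0$ is a fixed positive constant, $\mathrm{OWA}(X)$ and $F_{\max}(X)$ are proportional, so they have the same set of minimizers, and for any $\rho\ge 1$ a solution $X$ with $\mathrm{OWA}(X)\le\rho\cdot\min_{Y\in\Phi}\mathrm{OWA}(Y)$ satisfies $F_{\max}(X)\le\rho\cdot\min_{Y\in\Phi}F_{\max}(Y)$, and conversely. Hence the feasible solutions and their objective-value ratios are carried over unchanged, i.e. the map is approximation preserving in both directions.

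There is no genuine obstacle here; the only points that need care are (i) nonnegativity of the costs, which is precisely what guarantees that the padded scenario realizes the minimum for \emph{every} solution, and (ii) the hypothesis $\alpha>0$, without which the Hurwicz value would collapse to $\min_{j}F(X,\pmb{c}_j)\equiv 0$ and the reduction would carry no information — this matches the threshold already visible in Theorem~\ref{thmK1}. Combined with the known (strong) NP-hardness and inapproximability of \textsc{Min-Max}~$\mathcal{P}$ (e.g. for $\mathcal{P}=\textsc{Shortest Path}$, already for two scenarios, and for unbounded $K$), this observation immediately transfers the same lower bounds to \textsc{Min-Hurwicz}~$\mathcal{P}$, and in particular shows that the polynomially solvable case $K=2$, $\alpha\in[0,1/2]$ of Theorem~\ref{thmK1} does not extend to three or more scenarios for any $\alpha\in(0,1]$.
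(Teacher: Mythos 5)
Your proof is correct and is essentially the same as the paper's: both pad the instance with an all-zero $(K+1)$th scenario so that the Hurwicz value of every solution becomes $\alpha\cdot\max_{j\in[K]}F(X,\pmb{c}_j)$, which makes the reduction approximation preserving. Your added remarks on the roles of cost nonnegativity and of $\alpha>0$ are accurate but not needed beyond what the paper states.
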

\begin{proof}
Consider an instance of \textsc{Min-Max}~$\mathcal{P}$ with 
scenario set $\Gamma=\{\pmb{c}_1,\ldots,\pmb{c}_K\}$.
To build an instance of  \textsc{Min-Hurwicz}~$\mathcal{P}$, we only 
add to $\Gamma$ the $(K+1)$th
scenario $\pmb{c}_{K+1}$ with the costs equal to~0 for all the elements.
Now for each $X\in \Phi$ it holds
$\mathrm{OWA}(X)=\alpha\cdot \max_{j\in [K+1]} F(X,\pmb{c}_j)=
\alpha\cdot \max_{j\in [K]} F(X,\pmb{c}_j)$.
Therefore, it is evident that the reduction is approximation preserving.
\end{proof}
Theorem~\ref{thmhurr2} and the hardness results obtained by \cite{KZ09, KY97},
lead to the following corollary:
\begin{cor}
For any $\alpha\in (0,1]$ and $K\geq 3$ the \textsc{Min-Hurwicz Shortest Path} problem is NP-hard.
Furthermore, if $K$ is unbounded, then for any $\alpha\in (0,1]$,  \textsc{Min-Hurwicz Shortest Path} is
strongly NP-hard and not approximable within $O(\log^{1-\epsilon}K)$ for any $\epsilon>0$ unless NP $\subseteq$
 DTIME$(n^{\mathrm{poly}(\log n)})$.
\label{corhurk1}
\end{cor}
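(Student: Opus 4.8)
The plan is to obtain both parts of the corollary by combining Observation~\ref{thmhurr2} with the known complexity of \textsc{Min-Max Shortest Path}. Recall that Observation~\ref{thmhurr2} provides, for every fixed $\alpha\in(0,1]$, a polynomial-time approximation-preserving reduction from \textsc{Min-Max}~$\mathcal{P}$ with $K$ scenarios to \textsc{Min-Hurwicz}~$\mathcal{P}$ with $K+1$ scenarios, under which $\mathrm{OWA}(X)=\alpha\cdot\max_{j\in[K]}F(X,\pmb{c}_j)$ for every feasible~$X$.

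For the first assertion, fix $K\ge 3$ and $\alpha\in(0,1]$. I would start from an instance of \textsc{Min-Max Shortest Path} with $K-1\ge 2$ scenarios, which is NP-hard by the classical result of~\cite{KY97} (reproved in this paper in the course of establishing Corollary~\ref{thm1a}). Applying the reduction of Observation~\ref{thmhurr2} produces, in polynomial time, an instance of \textsc{Min-Hurwicz Shortest Path} with $K$ scenarios whose optimal value equals $\alpha$ times the min-max optimum of the original instance. Since $\alpha>0$ is a constant, the decision threshold merely rescales, so NP-hardness transfers and \textsc{Min-Hurwicz Shortest Path} with $K$ scenarios is NP-hard.

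For the second assertion, I would instead start from an instance of \textsc{Min-Max Shortest Path} with an unbounded number $K$ of scenarios, which by~\cite{KZ09} is strongly NP-hard and admits no $O(\log^{1-\epsilon}K)$-approximation for any $\epsilon>0$ unless NP $\subseteq$ DTIME$(n^{\mathrm{poly}(\log n)})$. The reduction of Observation~\ref{thmhurr2} appends a single all-zero scenario and leaves all other costs untouched; it is therefore polynomial and keeps the numbers polynomially bounded, so strong NP-hardness of \textsc{Min-Hurwicz Shortest Path} follows. For inapproximability, suppose there were a $\rho(K')$-approximation algorithm for \textsc{Min-Hurwicz Shortest Path} on $K'$ scenarios. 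Composed with the reduction, and cancelling the common factor~$\alpha$ (which does not affect the ratio), it would give a $\rho(K+1)$-approximation for \textsc{Min-Max Shortest Path} on $K$ scenarios. Taking $\rho(K')=O(\log^{1-\epsilon}K')$ and using $\log^{1-\epsilon}(K+1)=\Theta(\log^{1-\epsilon}K)$ would contradict~\cite{KZ09}, which yields the stated inapproximability for \textsc{Min-Hurwicz Shortest Path}.

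There is no genuinely hard step here: the whole argument is a transfer of hardness through Observation~\ref{thmhurr2}. The only points that need a moment's care are that the multiplicative constant~$\alpha$ leaves both decision thresholds and approximation ratios unchanged, and that replacing the scenario count $K$ by $K+1$ inside $\log^{1-\epsilon}(\cdot)$ is harmless because this function varies slowly; once these are noted, the corollary follows immediately.
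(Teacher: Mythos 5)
Your proposal is correct and follows exactly the route the paper intends: the corollary is derived by composing the approximation-preserving reduction of Observation~\ref{thmhurr2} with the known hardness of \textsc{Min-Max Shortest Path} for $2$ scenarios (\cite{KY97}) and for unbounded $K$ (\cite{KZ09}). The paper leaves the details implicit; your write-up merely makes explicit the harmless rescaling by $\alpha$ and the shift from $K$ to $K+1$ scenarios, both of which are handled correctly.
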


We now construct two approximation algorithms for \textsc{Min-Hurwicz}~$\mathcal{P}$ which can be applied when $K$ is unbounded. 
Notice that the approximation algorithm designed in Section~\ref{secnonin} cannot be applied to the case with $K\geq 3$ since the weights are then not nonincreasing. The first algorithm can be applied when $\alpha\in[1/2,1]$ and the second one will be valid for any $\alpha \in (0,1]$.
Let $\hat{c}_{i1}\geq \dots \geq \hat{c}_{iK}$ be the ordered sequence of the costs of element $e_i\in E$ over all scenarios. Let $\hat{c}_i=\alpha \hat{c}_{i1}+(1-\alpha)\hat{c}_{i2}$, $\hat{C}(X)=\sum_{e_i\in X} \hat{c}_i$ and let $\hat{X}$ minimize $\hat{C}(X)$. 
\begin{thm}
\label{thmhurr1}
	If $\alpha\in [1/2,1]$ and $K\geq 2$, then  for any $X\in \Phi$ the following inequality holds:
	$$\mathrm{OWA}(\hat{X})\leq   [\alpha K+(1-\alpha)(K-2)]\mathrm{OWA}(X).$$
\end{thm}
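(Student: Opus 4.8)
The plan is to mimic the structure of the proof of Theorem~\ref{thm1}: first bound $\mathrm{OWA}(\hat{X})$ from above by the aggregated cost $\hat{C}(\hat{X})$, then use optimality of $\hat{X}$ to pass to $\hat{C}(X)$, and finally bound $\hat{C}(X)$ from above in terms of $\mathrm{OWA}(X)$. The new ingredient is that now only the two largest costs of each element enter the aggregate, so the first step must be done with the Hurwicz weights rather than a nonincreasing vector.

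First I would handle the top inequality. Fix $X\in\Phi$ and let $\sigma$ order the scenarios so that $F(X,\pmb{c}_{\sigma(1)})\ge\cdots\ge F(X,\pmb{c}_{\sigma(K)})$; then $\mathrm{OWA}(X)=\alpha F(X,\pmb{c}_{\sigma(1)})+(1-\alpha)F(X,\pmb{c}_{\sigma(K)})$. For $\hat{X}$ the key point is that for each element $e_i$ the pair $(\hat{c}_{i1},\hat{c}_{i2})$ consists of the two largest costs of $e_i$, so for any two distinct scenario indices $p,q$ we have $\alpha\hat{c}_{i1}+(1-\alpha)\hat{c}_{i2}\ge \alpha\max\{c_{ip},c_{iq}\}+(1-\alpha)\min\{c_{ip},c_{iq}\}$ when $\alpha\ge 1/2$ (the Hurwicz value of a pair is maximized, among pairs dominated coordinatewise, at the two largest entries — this is where $\alpha\ge1/2$ is used, since it guarantees monotonicity of $\alpha a+(1-\alpha)b$ in $\max$ and $\min$ separately). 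Summing over $e_i\in\hat{X}$ with $p=\sigma(1),q=\sigma(K)$ evaluated at $\hat X$ gives $\mathrm{OWA}(\hat{X})\le\hat{C}(\hat{X})$, exactly analogously to~(\ref{f1}).

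Next, $\hat{C}(\hat{X})\le\hat{C}(X)$ by definition of $\hat{X}$, and I would bound $\hat{C}(X)=\sum_{e_i\in X}(\alpha\hat{c}_{i1}+(1-\alpha)\hat{c}_{i2})$. Using $\hat{c}_{i1},\hat{c}_{i2}\le\sum_{j\in[K]}c_{ij}$ crudely would lose too much; instead I expect the right estimate is $\alpha\hat{c}_{i1}+(1-\alpha)\hat{c}_{i2}\le\alpha\sum_j c_{ij}+(1-\alpha)\bigl(\sum_j c_{ij}-\hat c_{i1}\bigr)\le\alpha\sum_j c_{ij}+(1-\alpha)\sum_j c_{ij}$, but to get the sharper constant $\alpha K+(1-\alpha)(K-2)$ I need to be more careful and track that the summand is at most $\sum_j c_{ij}$ minus a genuine slack. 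The cleanest route is probably: $\hat c_{i1}+\hat c_{i2}\le \sum_j c_{ij}$ is false in general (it's an underestimate gives the wrong direction), so instead bound $\alpha\hat c_{i1}+(1-\alpha)\hat c_{i2}\le \hat c_{i1}\le\sum_j c_{ij}$, and then combine with a lower bound on $\mathrm{OWA}(X)$ that is correspondingly stronger. For the lower bound, since the weights $w_1=\alpha,w_K=1-\alpha$ are concentrated, I would compare $\mathrm{OWA}(X)=\alpha F(X,\pmb c_{\sigma(1)})+(1-\alpha)F(X,\pmb c_{\sigma(K)})$ with the average: each $F(X,\pmb c_{\sigma(j)})$ for $2\le j\le K-1$ lies between $F(X,\pmb c_{\sigma(K)})$ and $F(X,\pmb c_{\sigma(1)})$, so $\sum_{j\in[K]}F(X,\pmb c_{\sigma(j)})\le F(X,\pmb c_{\sigma(1)})+(K-2)F(X,\pmb c_{\sigma(1)})+F(X,\pmb c_{\sigma(K)})$; rearranging, $\sum_{e_i\in X}\sum_j c_{ij}=\sum_j F(X,\pmb c_{\sigma(j)})\le (K-1)F(X,\pmb c_{\sigma(1)})+F(X,\pmb c_{\sigma(K)})$. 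Since $\mathrm{OWA}(X)\ge\tfrac{1}{K}[\,\alpha/\alpha\cdot\ldots\,]$ — more precisely I want to produce the coefficient $[\alpha K+(1-\alpha)(K-2)]$ out of $(K-1)F(X,\pmb c_{\sigma(1)})+F(X,\pmb c_{\sigma(K)})\le c\cdot(\alpha F(X,\pmb c_{\sigma(1)})+(1-\alpha)F(X,\pmb c_{\sigma(K)}))$, which forces $c\ge(K-1)/\alpha$ and $c\ge 1/(1-\alpha)$; the stated bound suggests the intended split uses $\hat c_{i1}\le \sum_j c_{ij}$ together with $\mathrm{OWA}(X)\ge \alpha\cdot\tfrac1K\sum_j F(X,\pmb c_{\sigma(j)})+(1-\alpha)\cdot$(something), so I would reverse-engineer the exact convex combination from the target constant.

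Combining the three steps yields $\mathrm{OWA}(\hat X)\le\hat C(\hat X)\le\hat C(X)\le[\alpha K+(1-\alpha)(K-2)]\,\mathrm{OWA}(X)$. \textbf{The main obstacle} is the last step: getting the precise constant $\alpha K+(1-\alpha)(K-2)=K-2(1-\alpha)$ rather than a lazier $K$ requires exploiting that the \emph{second-largest} element cost $\hat c_{i2}$ (weighted by $1-\alpha$) is genuinely smaller than $\hat c_{i1}$, and simultaneously that $\mathrm{OWA}(X)$ is bounded below by a convex combination that pays attention to all $K$ scenario costs, not just the extreme two. I would set up the two bounds $\hat C(X)\le\alpha\sum_{e_i\in X}\sum_j c_{ij}+(1-\alpha)\sum_{e_i\in X}(\sum_j c_{ij}-\hat c_{i1})$ and $\mathrm{OWA}(X)\ge\alpha F(X,\pmb c_{\sigma(1)})\ge\tfrac{\alpha}{K}\sum_j F(X,\pmb c_{\sigma(j)})$ together with $F(X,\pmb c_{\sigma(1)})\ge \sum_j c_{ij}-\hat c_{i1}$-type termwise comparisons, then check that the arithmetic closes to exactly $K-2(1-\alpha)$; verifying that this constant is attainable (tightness) would be an optional addendum via a table-style example analogous to Table~\ref{tab1}.
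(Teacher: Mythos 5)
Your skeleton matches the paper's: bound $\mathrm{OWA}(\hat X)\le\hat C(\hat X)\le\hat C(X)$ and then relate $\hat C(X)$ to $\mathrm{OWA}(X)$, and your first step (for $\alpha\ge 1/2$ the Hurwicz value of any pair of costs of $e_i$ is at most $\alpha\hat c_{i1}+(1-\alpha)\hat c_{i2}$) is exactly the paper's inequality~(\ref{g1}). However, the final step, which carries all the content, is left undone, and along the way you discard the very fact you need. The inequality $\hat c_{i1}+\hat c_{i2}\le\sum_{j\in[K]}c_{ij}$, which you declare ``false in general,'' is true: $\hat c_{i1}$ and $\hat c_{i2}$ are two of the $K$ nonnegative values $c_{i1},\dots,c_{iK}$, so their sum is at most the total. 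Combined with $1-\alpha\le\alpha$ it gives $\hat c_i=\alpha\hat c_{i1}+(1-\alpha)\hat c_{i2}\le\alpha(\hat c_{i1}+\hat c_{i2})\le\alpha\sum_{j\in[K]}c_{ij}$, hence $\hat C(X)\le\alpha\sum_{e_i\in X}\sum_{j\in[K]}c_{ij}$, which is the paper's~(\ref{g2}). Your fallback $\alpha\hat c_{i1}+(1-\alpha)\hat c_{i2}\le\hat c_{i1}\le\sum_j c_{ij}$ loses this factor of $\alpha$, after which the arithmetic cannot close to the stated constant.

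The more serious gap is the lower bound $\mathrm{OWA}(X)\ge\frac{1}{K+\frac{1-\alpha}{\alpha}(K-2)}\sum_{e_i\in X}\sum_{j\in[K]}c_{ij}$, which you only gesture at (``reverse-engineer,'' ``check that the arithmetic closes''). Your own analysis of the requirement $(K-1)F(X,\pmb c_{\rho(1)})+F(X,\pmb c_{\rho(K)})\le c\cdot\mathrm{OWA}(X)$, concluding that both $c\ge(K-1)/\alpha$ and $c\ge 1/(1-\alpha)$ are forced, treats the two coefficients independently; on that reading the target constant is unreachable (at $K=2$, $\alpha=1$ the second condition already fails). The missing idea is to exploit $F(X,\pmb c_{\rho(1)})\ge F(X,\pmb c_{\rho(K)})\ge 0$ so that surplus in the coefficient of the larger term compensates a deficit in the coefficient of the smaller one: after clearing denominators the claim becomes~(\ref{g5}), whose left-hand side only decreases as $K$ decreases to $2$, where it equals $\alpha(2\alpha-1)\bigl(F(X,\pmb c_{\rho(1)})-F(X,\pmb c_{\rho(K)})\bigr)\ge 0$ precisely because $\alpha\ge 1/2$. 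Without establishing this inequality (and restoring the factor $\alpha$ above), your argument does not yield the bound $\alpha K+(1-\alpha)(K-2)$.
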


\begin{proof}
Let $\sigma$ be a permutation  of $[K]$ such that $F(\hat{X},\pmb{c}_{\sigma(1)})\geq \dots \geq F(\hat{X},\pmb{c}_{\sigma(K)})$. It holds
\begin{equation}
\label{g1} 
\mathrm{OWA}(\hat{X})=\sum_{e_i\in \hat{X}} (\alpha c_{i\sigma(1)}+(1-\alpha)c_{i\sigma(K)})\leq \sum_{e_i\in \hat{X}} \hat{c}_{i}=\hat{C}(\hat{X}).
\end{equation} 	
Since $\alpha \geq 1/2$ and $\hat{X}$ minimize $\hat{C}(X)$, we get
\begin{equation}
\label{g2}
\hat{C}(\hat{X})\leq \hat{C}(X)=\sum_{e_i\in X}\hat{c}_{i}\leq \alpha \sum_{e_i\in X} \sum_{j\in [K]}  c_{ij}.
\end{equation} 
We now prove the following inequality:
\begin{equation}
\label{g3}
\mathrm{OWA}(X)\geq \frac{1}{K+\frac{1-\alpha}{\alpha}(K-2)} \sum_{e_i\in X}\sum_{j\in [K]} c_{ij}.
\end{equation}
Let $\rho$ be a permutation of $[K]$ such that $F(X,\pmb{c}_{\rho(1)})\geq \dots \geq F(X,\pmb{c}_{\rho(K)})$.
Observe first that
\begin{equation}
\label{g400}
\sum_{e_i\in X}\sum_{j\in [K]} c_{ij} =\sum_{j \in [K]} F(X,\pmb{c}_{\rho(j)})\leq (K-1)F(X,\pmb{c}_{\rho(1)})+F(X,\pmb{c}_{\rho(K)}).
\end{equation}
We  will now show that for any $\alpha\in [1/2,1]$ it holds
\begin{equation}
{\rm OWA}(X)\geq \frac{K-1}{K+\frac{1-\alpha}{\alpha}(K-2)} F(X,\pmb{c}_{\rho(1)})+\frac{1}{K+\frac{1-\alpha}{\alpha}(K-2)} F(X,\pmb{c}_{\rho(K)}),
\label{g4}
\end{equation}
which together with~(\ref{g400}) will imply~(\ref{g3}).
Since $K\geq 2$, and ${\rm OWA}(X)=\alpha F(X,\pmb{c}_{\rho(1)})+(1-\alpha)F(X,\pmb{c}_{\rho(K)})$,
 the inequality (\ref{g4}) can be rewritten in the following equivalent form:
\begin{equation}
\alpha(2\alpha-1)F(X,\pmb{c}_{\rho(1)})+
((1-\alpha)K-2(1-\alpha)^2-\alpha)F(X,\pmb{c}_{\rho(K)})\geq 0.
\label{g5}
\end{equation}
Note that $F(X,\pmb{c}_{\rho(1)})\geq F(X,\pmb{c}_{\rho(K)})\geq 0$.
Hence, in order to prove (\ref{g5}), it suffices to show it for $K=2$. Thus, we get
\begin{equation}
\alpha(2\alpha-1)(F(X,\pmb{c}_{\rho(1)})
-F(X,\pmb{c}_{\rho(K)}))\geq 0.
\label{g6}
\end{equation}
We  see  at once that inequality~(\ref{g6}) holds for every $\alpha\in [1/2,1]$.
Combining~(\ref{g1}), (\ref{g2}) and~(\ref{g3}) completes the proof.
\end{proof}

\begin{cor}
\label{corhurk}
		If $\alpha\in [1/2,1]$ and $\mathcal{P}$ is polynomially solvable, then \textsc{Min-Hurwicz}~$\mathcal{P}$ is approximable within $\alpha K + (1-\alpha)(K-2)$.
\end{cor}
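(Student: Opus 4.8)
The plan is to derive the corollary directly from Theorem~\ref{thmhurr1}, with $\hat{X}$ playing the role of the algorithm's output. First I would check that $\hat{X}$ is computable in polynomial time. For each element $e_i\in E$ the ordered costs $\hat{c}_{i1}\geq\dots\geq\hat{c}_{iK}$ are obtained simply by sorting the $K$ values $c_{i1},\dots,c_{iK}$, so each aggregated cost $\hat{c}_i=\alpha\hat{c}_{i1}+(1-\alpha)\hat{c}_{i2}$ is computed in polynomial time; minimizing $\hat{C}(X)=\sum_{e_i\in X}\hat{c}_i$ over $\Phi$ is then exactly an instance of the deterministic problem $\mathcal{P}$ with cost vector $(\hat{c}_1,\dots,\hat{c}_n)$, which is solvable in polynomial time by assumption. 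Hence $\hat{X}$ is produced by a polynomial-time procedure.

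Next I would instantiate Theorem~\ref{thmhurr1} with $X=X^*$, an optimal solution to \textsc{Min-Hurwicz}~$\mathcal{P}$. Since $\alpha\in[1/2,1]$ and $K\geq 2$, the hypotheses of that theorem are met, and we obtain $\mathrm{OWA}(\hat{X})\leq[\alpha K+(1-\alpha)(K-2)]\,\mathrm{OWA}(X^*)$. Thus $\hat{X}$ is a feasible solution whose OWA value is within a factor $\alpha K+(1-\alpha)(K-2)$ of the optimum, which is precisely the claimed approximation guarantee. (The degenerate case $K=1$ need not be treated, since there $\mathrm{OWA}(X)=F(X,\pmb{c}_1)$ and the problem reduces to $\mathcal{P}$ itself.)

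Finally, for the statement to be a genuine approximation result I would verify that the ratio is at least one: for $\alpha\in[1/2,1]$ and $K\geq 2$ we have $\alpha K+(1-\alpha)(K-2)=K-2(1-\alpha)\geq 2-2(1-\alpha)=2\alpha\geq 1$, so the bound is meaningful (and for $K=2$ it specializes to the $2\alpha$-approximation mentioned earlier). I do not expect a real obstacle here: all the substance is in Theorem~\ref{thmhurr1}, and the corollary is just the observation that its upper bound on $\mathrm{OWA}(\hat{X})$, combined with the polynomial-time computability of $\hat{X}$, yields a polynomial-time $(\alpha K+(1-\alpha)(K-2))$-approximation algorithm. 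The only points needing a moment's care are the bookkeeping of trivial inputs, which do not affect the stated claim.
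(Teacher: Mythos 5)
Your proposal is correct and follows exactly the route the paper intends: Corollary~\ref{corhurk} is an immediate consequence of Theorem~\ref{thmhurr1} applied with $X=X^*$, together with the observation that the aggregated costs $\hat{c}_i$ and the minimizer $\hat{X}$ of $\hat{C}$ are computable in polynomial time when $\mathcal{P}$ is. Your additional checks (the ratio $K-2(1-\alpha)\geq 2\alpha\geq 1$ and the specialization to $2\alpha$ for $K=2$) match the discussion the paper gives right after the corollary.
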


Let us analyze some consequences of Corollary~\ref{corhurk}. If $K=2$, then the algorithm is equivalent to the approximation algorithm designed in Section~\ref{secnonin}, in which case we get the approximation ratio of $2\alpha$. The largest worst case ratio of the algorithm, equal to $K$, occurs when $\alpha=1$, i.e. when the Hurwicz criterion becomes the maximum.  On the other hand, the smallest worst case ratio equal to $K-1$ is when $\alpha=1/2$, i.e. when the Hurwicz criterion is the average of the minimum and maximum.

The bound obtained in Theorem~\ref{thmhurr1} does not hold when $\alpha\in (0, 1/2)$. For this case we will design an approximation algorithm which is based on a different idea. Suppose that we have a $\gamma$-approximation algorithm for the \textsc{Min-Max}~$\mathcal{P}$ problem. Notice that in the general case, when $\mathcal{P}$ is polynomially solvable, $\gamma$ can be equal to $K$ (see Corollary~\ref{coro1}), but for some particular problems such as \textsc{Min-Max Minimum Spanning Tree}  or
 \textsc{Min-Max Minimum  Selecting Items}
better approximation algorithms exist (\cite{DOE13, KZ11, KZ09x, KZ13}). 
\begin{thm}
\label{thmhurr3}
Suppose that there exists an approximation algorithm
for \textsc{Min-Max}~$\mathcal{P}$
 with a worst case ratio of~$\gamma>1$.
Let $\hat{X}\in \Phi$ be a solution constructed by this algorithm. Then for any $\alpha \in (0,1]$ and $X\in \Phi$ it holds
\begin{equation}
\mathrm{OWA}(\hat{X})\leq 
\begin{cases}
	\gamma\cdot {\rm OWA}(X) &\text{\rm if $\alpha=1$ or $\min_{j\in [K]} F(\hat{X},\pmb{c}_j)=0$,}\\
	 \frac{\gamma}{\alpha} \cdot {\rm OWA}(X) &\text{\rm if $\alpha\in (0,1).$} 
\end{cases}
\label{apprhur}
\end{equation}
\end{thm}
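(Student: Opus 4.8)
The plan is to relate $\mathrm{OWA}(\hat X)$ and $\mathrm{OWA}(X)$ through the quantity $M(Y)=\max_{j\in[K]}F(Y,\pmb c_j)$, using the fact that $\hat X$ is a $\gamma$-approximation for \textsc{Min-Max}~$\mathcal P$. First I would record the two elementary sandwich bounds that follow from the OWA properties stated in the excerpt: since the largest solution cost receives weight $\alpha$ and the smallest receives weight $1-\alpha$ (the remaining weights are zero for the Hurwicz criterion), for every $Y\in\Phi$ we have
\begin{equation}
\alpha\, M(Y)\ \le\ \mathrm{OWA}(Y)\ \le\ M(Y),
\label{planbound}
\end{equation}
the right inequality because $\mathrm{OWA}(Y)=\alpha M(Y)+(1-\alpha)\min_j F(Y,\pmb c_j)\le \alpha M(Y)+(1-\alpha)M(Y)$, and the left inequality because $\min_j F(Y,\pmb c_j)\ge 0$.

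Next I would chain these. Let $X^{\mathrm{mm}}$ be an optimal solution to \textsc{Min-Max}~$\mathcal P$ on the same scenario set, so $M(X^{\mathrm{mm}})=\min_{Y\in\Phi}M(Y)$. Because $\hat X$ is produced by the $\gamma$-approximation algorithm, $M(\hat X)\le \gamma\, M(X^{\mathrm{mm}})$. Now for an arbitrary $X\in\Phi$, combine: $\mathrm{OWA}(\hat X)\le M(\hat X)\le \gamma\, M(X^{\mathrm{mm}})\le \gamma\, M(X)$. To finish I convert $M(X)$ back into $\mathrm{OWA}(X)$ via the left half of \eqref{planbound}: $M(X)\le \tfrac1\alpha\mathrm{OWA}(X)$ when $\alpha\in(0,1)$, giving $\mathrm{OWA}(\hat X)\le \tfrac{\gamma}{\alpha}\mathrm{OWA}(X)$, which is the second case. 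For the first case, when $\alpha=1$ the Hurwicz criterion \emph{is} the maximum, so $\mathrm{OWA}(Y)=M(Y)$ for all $Y$ and the bound collapses to the ordinary $\gamma$-guarantee; when $\min_{j}F(\hat X,\pmb c_j)=0$ the right inequality in \eqref{planbound} is in fact an equality at $\hat X$, i.e. $\mathrm{OWA}(\hat X)=\alpha M(\hat X)\le M(\hat X)$, and the same chain $\mathrm{OWA}(\hat X)\le M(\hat X)\le\gamma M(X^{\mathrm{mm}})\le\gamma M(X)$ applies — but here I should be slightly more careful and note that even $\mathrm{OWA}(\hat X)=\alpha M(\hat X)\le\alpha\gamma M(X)\le\gamma\,\mathrm{OWA}(X)$ using $\alpha M(X)\le\mathrm{OWA}(X)$, so the factor $\gamma$ (not $\gamma/\alpha$) is correct.

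The only subtle point — and the one I would flag as the main thing to get right rather than a genuine obstacle — is the bookkeeping of which $\alpha$ appears where: the loss of the factor $1/\alpha$ comes solely from passing from $M(X)$ to $\mathrm{OWA}(X)$ on the \emph{lower-bound} side, and it is avoided exactly when we do not need that step in full strength, namely when $\alpha=1$ (no loss at all) or when $\mathrm{OWA}(\hat X)$ is already as small as $\alpha M(\hat X)$ because $\hat X$ has a zero-cost scenario. Everything else is the two inequalities in \eqref{planbound} plus transitivity through the \textsc{Min-Max} approximation guarantee, so no pseudopolynomial machinery or reductions are needed here.
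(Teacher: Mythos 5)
Your proof is correct and follows essentially the same route as the paper's: both rest on the sandwich $\alpha\max_{j\in[K]}F(Y,\pmb{c}_j)\le\mathrm{OWA}(Y)\le\max_{j\in[K]}F(Y,\pmb{c}_j)$ and chain through the optimal min-max value via the $\gamma$-approximation guarantee, handling the two cases exactly as the paper does. The only cosmetic difference is that you pivot through $M(X)$ for the arbitrary solution $X$, whereas the paper compares both sides directly to $OPT_{\max}$; the underlying inequalities are identical.
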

\begin{proof}
It follows immediately that for any $\alpha\in (0,1]$:
\begin{equation}
\mathrm{OWA}(X)\geq \alpha \max_{j\in [K]} F(X,\pmb{c}_j)
\geq \alpha\cdot OPT_{\max},
\label{lbhur}
\end{equation}
where $OPT_{\max}=\min_{X\in \Phi}\max_{j\in [K]} F(X,\pmb{c}_j)$.
For the first case of~(\ref{apprhur}), we note that
\[
\mathrm{OWA}(\hat{X})=\alpha \cdot \max_{j\in [K]} F(\hat{X},\pmb{c}_j)
\leq \gamma\alpha \cdot OPT_{\max}{\leq}\gamma\cdot {\rm OWA}(X),
\]
where the last inequality follows from~(\ref{lbhur}). For the case $\alpha\in (0,1)$, we get
\[
\mathrm{OWA}(\hat{X})=\alpha \cdot \max_{j\in [K]} F(\hat{X},\pmb{c}_j)+
(1-\alpha) \cdot \min_{j\in [K]} F(\hat{X},\pmb{c}_j)\leq
\max_{j\in [K]} F(\hat{X},\pmb{c}_j)\leq \gamma \cdot OPT_{\max}
{\leq}\frac{\gamma}{\alpha}\cdot {\rm OWA}(X),
\]
where the last inequality also follows from~(\ref{lbhur}).
\end{proof}

Let us now apply Corollary~\ref{corhurk} and Theorem~\ref{thmhurr3} to some special cases of \textsc{Min-Hurwicz}~$\mathcal{P}$. If $\mathcal{P}$ is \textsc{Shortest Path}, then the problem is approximable within $\alpha K +(1-\alpha) (K-2)$ for $\alpha \in [1/2,1]$ and within $K/\alpha$ for $\alpha \in (0,1/2)$, if we use the $K$-approximation algorithm for the \textsc{Min-Max Shortest Path} problem. If $\mathcal{P}$ is \textsc{Minimum Spanning Tree}, then the problem is approximable within $O((1/\alpha)\log^2 K)$ with a high probability for any $\alpha\in (0,1]$, if we use the randomized $O(\log^2 K)$-approximation algorithm for \textsc{Min-Max Minimum Spanning Tree} designed in~\cite{KZ11}. Finally, when $\mathcal{P}$ is \textsc{Minimum Selecting Items}, then the problem is approximable within $O((1/\alpha) \log K/\log\log K)$, when the $O(\log K / \log \log K)$-approximation algorithm for \textsc{Min-Max Minimum Selecting Items} constructed in~\cite{DOE13} is applied.

\section{Summary}

In this paper we have discussed a class of combinatorial optimization problems with uncertain costs specified in the form of a discrete scenario set.  We have applied the OWA operator as the criterion of choosing a solution.
We have obtained several 
general computational properties of the resulting
 \textsc{Min-Owa}~$\mathcal{P}$ problem.
 Except for some very special weight distributions, the \textsc{Min-Owa}~$\mathcal{P}$ problem is NP-hard even for~2 scenarios. But, if the number of scenarios is constant, then for all weight distributions 
 \textsc{Min-Owa}~$\mathcal{P}$  admits a fully polynomial time approximations scheme if only the corresponding exact problem can be solved in pseudopolynomial time. This is, however, only a theoretical result, because the FPTAS is typically exponential in the number of scenarios. If the number of scenarios is unbounded, then the problem becomes strongly NP-hard and two general approximation properties can be established.  If the weights are nonincreasing, then the problem admits an approximation algorithm with the worst case ratio equal to  $w_1K$, if only the deterministic problem is polynomially solvable. The largest approximation ratio equal to $K$ occurs for the maximum criterion and it becomes smaller when more uniform weight distributions are used. On the other hand, if the weights are nondecreasing, then \textsc{Min-owa}~$\mathcal{P}$ is not at all approximable for some basic network problems such as \textsc{Shortest Path}, \textsc{Minimum Spanning Tree}, \textsc{Minimum Assignment} and \textsc{Minimum s-t Cut}. This negative result remains true when OWA is median. All the new and known results for the \textsc{Min-Owa Shortest Path} problem are summarized in Table~\ref{tabsum}. A similar table can be shown for other particular problems \textsc{Min-Owa}~$\mathcal{P}$.
\begin{table}
\centering
\caption{Summary of the known and new results for the \textsc{Min-Owa Shortest Path} problem.} \label{tabsum}
\footnotesize
\begin{tabular}{l|l|l|l}  
\hline 
		Problem & $K=2$ &  $K\geq 3$ constant & $K$ unbounded \\ \hline
\textsc{Min-Owa}~$\mathcal{P}$ & equivalent to & NP-hard & strongly NP-hard  \\
                               & \textsc{Min-Hurwicz}~$\mathcal{P}$ &FPTAS  &  appr. within $w_1K$ if the weights\\ 
															 &                                    &        &  are nonincreasing\\                  
															 &                                    &        & not at all appr. if the weights \\
															 &                                    &        & are nondecreasing                 \\\hline \hline
\textsc{Min-Max}~$\mathcal{P}$ & NP-hard & NP-hard& strongly NP-hard   \\
                               & FPTAS   & FPTAS  &  appr. within $K$ \\
															 &          &      & not appr. within  \\
															  &         &      & $O(\log^{1-\epsilon}K)$, $\epsilon>0$ \\ \hline
\textsc{Min-Min}~$\mathcal{P}$ & poly. solvable  & poly. solvable  & poly. solvable  \\ \hline
\textsc{Min-Average}~$\mathcal{P}$ & poly. solvable  & poly. solvable  & poly. solvable  \\ \hline
\textsc{Min-Hurwicz}~$\mathcal{P}$ & poly. solvable  if $\alpha\in [0,1/2)$ & NP-hard if $\alpha\in (0,1]$ & strongly NP-hard if $\alpha \in (0,1]$  \\
																	 & NP-hard if $\alpha \in (1/2,1]$ & FPTAS & appr. within \\
																	 & FPTAS if $\alpha \in (1/2,1]$ &         & $\alpha K+ (1-\alpha)(K-2)$ if $\alpha\in [1/2,1]$ \\
																	 &                                &         & $K/\alpha$ if $\alpha \in (0,1/2)$ \\
																		 &          &      & not appr. within  \\ 
															  &         &      & $O(\log^{1-\epsilon}K)$, $\epsilon>0$ \\ \hline
\textsc{Min-Quant}$(k)$~$\mathcal{P}$ & poly. solvable if $k=2$ & poly. solvable if $k=K$ & strongly NP-hard for any $k\in [K-1]$ \\
																		&  NP-hard if $k=1$ & NP-hard for any  & approx. within $K$ when $k$ is constant  \\
																		&  FPTAS     &  constant $k\in[K-1]$     &        not at all appr. if $k=\lfloor K/2 \rfloor +1$ \\
&  & FPTAS 																		
																	
																		 \end{tabular}
\end{table}

Our goal has been to provide general properties of  \textsc{Min-Owa}~$\mathcal{P}$, 
which follow only from the type of  the weight distribution in the OWA operator. 
We have not taken  into account a particular structure of an
underling  deterministic problem $\mathcal{P}$. Thus,
the results obtained may be 
 additionally refined 
 if some properties of $\mathcal{P}$ are taken into account.

\subsubsection*{Acknowledgements}
This work was 
partially supported by
 the National Center for Science (Narodowe Centrum Nauki), grant  2013/09/B/ST6/01525.


\begin{thebibliography}{10}

\bibitem{AH93}
R.~K. Ahuja, T.~L. Magnanti, and J.~B. Orlin.
\newblock {\em Network Flows: theory, algorithms, and applications}.
\newblock Prentice Hall, Englewood Cliffs, New Jersey, 1993.

\bibitem{ABV08}
H.~Aissi, C.~Bazgan, and D.~Vanderpooten.
\newblock Complexity of the min--max (regret) versions of min cut problems.
\newblock {\em Discrete Optimization}, 5:66--73, 2008.

\bibitem{ABV09}
H.~Aissi, C.~Bazgan, and D.~Vanderpooten.
\newblock Min--max and min--max regret versions of combinatorial optimization
  problems: {A} survey.
\newblock {\em European Journal of Operational Research}, 197:427--438, 2009.

\bibitem{ABV10}
H.~Aissi, C.~Bazgan, and D.~Vanderpooten.
\newblock General approximation schemes for minmax (regret) versions of some
  (pseudo-)polynomial problems.
\newblock {\em Discrete Optimization}, 7:136--148, 2010.

\bibitem{AV01}
I.~Averbakh.
\newblock On the complexity of a class of combinatorial optimization problems
  with uncertainty.
\newblock {\em Mathematical Programming}, 90:263--272, 2001.

\bibitem{AZ02}
A.~Avidor and U.~Zwick.
\newblock Approximating {MIN $k$-SAT}.
\newblock {\em Lecture Notes in Computer Science}, 2518:465--475, 2002.

\bibitem{BN09}
A.~Ben-Tal, L.~El~Ghaoui, and A.~Nemirovski.
\newblock {\em Robust Optimization}.
\newblock Princeton Series in Applied Mathematics. Princeton University Press,
  Princeton, NJ, 2009.

\bibitem{BS03}
D.~Bertsimas and M.~Sim.
\newblock Robust discrete optimization and network flows.
\newblock {\em Mathematical Programming}, 98:49--71, 2003.

\bibitem{DOE13}
B.~Doer.
\newblock Improved approximation algorithms for the min-max selecting items
  problem.
\newblock {\em Information Processing Letters}, 113:747--749, 2013.

\bibitem{GPS10}
L.~Galand, P.~Perny, and O.~Spanjaard.
\newblock Choquet-based optimisation in multiobjective shortest path and
  spanning tree problems.
\newblock {\em European Journal of Operational Research}, 204:303--315, 2010.

\bibitem{GS12}
L.~Galand and O.~Spanjaard.
\newblock {E}xact algorithms for {OWA}-optimization in multiobjective spanning
  tree problems.
\newblock {\em Computers and Operations Research}, 39:1540--1554, 2012.

\bibitem{GN72}
R.~S. Garfinkel and G.~L. Nemhauser.
\newblock {\em Integer Programming}.
\newblock John Wiley and Sons Inc., 1972.

\bibitem{GL08}
M.~Grabisch and C.~Labreuche.
\newblock A decade of application of the choquet and sugeno integrals in
  multi-criteria decision aid.
\newblock {\em 4OR}, 6:1--44, 2008.

\bibitem{YKB11}
J.~Kacprzyk, R.~Yager, and G.~E. Beliakov.
\newblock {\em Recent Developments in the Ordered Weighted Averaging Operators:
  Theory and Practice}.
\newblock Studies in Fuzziness and Soft Computing, 265. Springer, 2011.

\bibitem{KZ13}
A.~Kasperski, A.~Kurpisz, and P.~Zieli{\'n}ski.
\newblock Approximating the min-max (regret) selecting items problem.
\newblock {\em Information Processing Letters}, 113:23--29, 2013.

\bibitem{KZ09}
A.~Kasperski and P.~Zieli{\'n}ski.
\newblock On the approximability of minmax (regret) network optimization
  problems.
\newblock {\em Information Processing Letters}, 109:262--266, 2009.

\bibitem{KZ09x}
A.~Kasperski and P.~Zieli{\'n}ski.
\newblock A randomized algorithm for the min-max selecting items problem with
  uncertain weights.
\newblock {\em Annals of Operations Research}, 172(1):221--230, 2009.

\bibitem{KZ11}
A.~Kasperski and P.~Zieli{\'n}ski.
\newblock On the approximability of robust spanning problems.
\newblock {\em Theoretical Computer Science}, 412:365--374, 2011.

\bibitem{KM94}
R.~Kohli, R.~Krishnamurti, and P.~Mirchandani.
\newblock The minimum satisfiability problem.
\newblock {\em SIAM Journal on Discrete Mathematics}, 7:275--283, 1994.

\bibitem{KY97}
P.~Kouvelis and G.~Yu.
\newblock {\em Robust Discrete Optimization and its applications}.
\newblock Kluwer Academic Publishers, 1997.

\bibitem{LR57}
R.~D. Luce and H.~Raiffa.
\newblock {\em {G}ames and {D}ecisions: {I}ntroduction and {C}ritical
  {S}urvey}.
\newblock Dover Publications Inc., 1957.

\bibitem{MS08}
S.~Mittal and A.~S. Schulz.
\newblock A general framework for designing approximation schemes for
  combinatorial optimization problems with many objectives combined into one.
\newblock In A.~Goel, K.~Jansen, J.~D.~P. Rolim, and R.~Rubinfeld, editors,
  {\em APPROX-RANDOM}, volume 5171 of {\em Lecture Notes in Computer Science},
  pages 179--192. Springer-Verlag, 2008.

\bibitem{OS03}
W.~Ogryczak and T.~{\'S}liwi{\'n}ski.
\newblock On solving linear programs with ordered weighted averaging objective.
\newblock {\em European Journal of Operational Research}, 148:80--91, 2003.

\bibitem{PS82}
C.~H. Papadimitriou and K.~Steiglitz.
\newblock {\em Combinatorial optimization: algorithms and complexity}.
\newblock Dover Publications Inc., 1998.

\bibitem{PY00}
C.~H. Papadimitriou and M.~Yannakakis.
\newblock On the approximability of trade-offs and optimal access of web
  sources.
\newblock In {\em FOCS}, pages 86--92. IEEE Computer Society, 2000.

\bibitem{YA88}
R.~R. Yager.
\newblock On ordered weighted averaging aggregation operators in multi-criteria
  decision making.
\newblock {\em IEEE Transactions on Systems, Man and Cybernetics}, 18:183--190,
  1988.

\bibitem{JJ98}
G.~Yu and J.~Yang.
\newblock On the robust shortest path problem.
\newblock {\em Computers and Operations Research}, 25:457--468, 1998.

\end{thebibliography}

\end{document}